     \newcommand{\abs}[1]{\left|#1\right|}
\newcommand{\E}{{\mathbb E}}
\newcommand{\bZ}{{\mathbb Z}}
\newcommand{\Z}{{\mathbb Z}}
\newcommand{\R}{{\mathbb R}}
\numberwithin{equation}{section}
\renewcommand{\P}{\mathbb{P}}
\newtheorem{theorem}{Theorem}[section]
\newtheorem{lemma}[theorem]{Lemma}
\newtheorem{corollary}[theorem]{Corollary}
\newtheorem{proposition}[theorem]{Proposition}
\newtheorem*{notation*}{Notation}
\renewenvironment{abstract}
 {\small
  \begin{center}
  \bfseries \abstractname\vspace{-.5em}\vspace{0pt}
  \end{center}
  \list{}{%
    \setlength{\leftmargin}{16.5mm}
    \setlength{\rightmargin}{\leftmargin}%
  }%
  \item\relax}
 {\endlist}
\title{
Dynamics of advantageous mutant spread in \\ spatial death-birth and birth-death Moran models
}
\author{Jasmine Foo$^{1, \ast}$ \and\hspace*{-6pt} Einar Bjarki Gunnarsson$^{2, \ast}$ \and\hspace*{-6pt}  Kevin Leder$^{2}$ \and\hspace*{-6pt}David Sivakoff$^{3}$}
\date{%
    \footnotesize $^1$School of Mathematics, University of Minnesota, Twin Cities, MN 55455, USA. \\[2pt]
    $^2$Department of Industrial and Systems Engineering, University of Minnesota, Twin Cities, MN 55455, USA. \\[2pt]
    $^3$Departments of Statistics and Mathematics, The Ohio State University, OH 43210, USA. \\
        $^\ast$Corresponding authors: jyfoo@umn.edu (Jasmine Foo), gunna042@umn.edu (Einar Bjarki Gunnarsson).
}
\begin{document}

\maketitle

\begin{abstract}
The spread of an advantageous mutation through a population
is of fundamental interest in population genetics.
While the classical Moran model is formulated for a well-mixed population,
it has long been recognized that in real-world applications,
the population usually has an explicit spatial structure which
can significantly 
influence 
the dynamics.
In the context of cancer initiation in epithelial tissue,
several recent works have analyzed the dynamics of advantageous mutant
spread on integer lattices, using the biased voter model
from particle systems theory.
In this spatial version of the Moran model,
individuals first reproduce according to their
fitness and then replace a neighboring individual.
From a biological standpoint, the opposite dynamics,
where individuals first die and are then replaced 
by a neighboring individual according to its fitness, are equally relevant.
Here, we investigate this death-birth analogue of the biased voter model.
We construct the process mathematically, derive the associated dual process,
establish bounds on the survival probability of a single mutant,
and prove that the process has an asymptotic shape.
We also briefly discuss alternative birth-death and death-birth dynamics,
depending on how the mutant fitness advantage affects the dynamics.
We show that birth-death and death-birth formulations
of the biased voter model are equivalent
when fitness affects the former event of each update of the model,
whereas the birth-death model is fundamentally different from the death-birth model when fitness affects the latter event.
\end{abstract}

\section{Introduction}

The spread of an advantageous mutation through a population is of fundamental interest in population genetics.
In the classical Moran model, a population of size $N$ has two types of individuals, normal and mutant.
A normal individual has fitness 1 and a mutant has fitness $\lambda = 1+\beta$ with $\beta>0$.
Each individual $x$ is replaced at exponential rate 1 by the offspring of an individual chosen from the population (including $x$ itself) with probability proportional to its fitness \cite{moran1958random,durrett2008probability}.
For example, if there are $j$ mutants in the population at the time individual $x$ is replaced, the new individual is a mutant with probability $j\lambda/(j\lambda+(N-j))$.
When a single mutant is introduced to an otherwise normal population, a key evolutionary question concerns the probability that the mutation takes over the population.
For the Moran model, this mutant survival probability or fixation probability is easily calculated to be $\rho = (1-1/\lambda)/(1-(1/\lambda)^N)$.

In the Moran model, the population is assumed to be well-mixed, meaning that there is no sense of a spatial structure or distances between individuals.
The stepping stone model of Kimura and Weiss represents an early attempt to model the effect of a geographical structure on the evolution of genetic diversity \cite{kimura1964stepping}.
In works by Maruyama \cite{maruyama1970fixation,maruyama74} and Slatkin \cite{slatkin1981fixation}, the fixation probability $\rho$ of the Moran model was shown to generalize to spatially subdivided populations, under certain assumptions on migration between colonies.
Lieberman et al.~\cite{lieberman2005evolutionary} observed more recently that if individuals are arranged on the nodes of a graph, individuals are chosen for reproduction proportional to their fitness, and an individual at node $i$ places its offspring at node $j$ with probability $w_{ij}$, the fixation probability of a mutant is equal to $\rho$
if and only if 
the weight matrix $W = (w_{ij})$ is doubly stochastic.
For an undirected and unweighted graph, 
this condition is equivalent to all nodes having the same degree \cite{broom2008analysis}.
On degree-heterogeneous or directed graphs, the fixation probability of an advantageous mutant can be either larger or smaller than $\rho$, in which case the graph can be thought of as amplifying or suppressing selection of the mutant, respectively \cite{lieberman2005evolutionary,antal2006evolutionary,sood2008voter,masuda2009evolutionary,hindersin2015most}.

In addition to capturing the effect of geographical dispersion on the population genetics of organisms, a spatial version of the Moran model is relevant to the dynamics of cancer initiation in the human body.
Most cancers originate from epithelial tissue, which covers the inner and outer surfaces of organs and blood vessels in the body \cite{NIH}.
Epithelial tissue is made up of cells which are compactly packed and arranged into one or more structured layers.
In the 1970s, Williams and Bjerknes \cite{williams1972stochastic} suggested a simple model of the spread of cancer cells through the basal layer of epithelial tissue.
In the model, cells of two types, normal and cancerous, are arranged on a two-dimensional lattice.
Normal cells divide at exponential rate 1, and cancer cells divide at rate $\lambda=1+\beta$ with $\beta>0$.
Upon cell division, one daughter cell stays put, and the other replaces a neighboring cell chosen uniformly at random.
This model is known within the field of interacting particle systems as the {\em biased voter model}, where it is studied  on the integer lattice $\mathbb{Z}^d$ of arbitrary dimension $d \geq 1$ \cite{BramsonGriffeath81}.

Cancer generally arises through a multi-stage process of genetic mutations \cite{armitage1954age,armitage1957two,knudson1971mutation,knudson2001}.
As cells acquire more mutations, they become progressively more malignant, which can enable premalignant cells to expand into clones or ``fields'' that are predisposed to becoming cancerous.
The notion that cancer arises through a succession of mutation and premalignant field expansion is referred to as ``field cancerization'' or ``the cancer field effect''.
In recent years, several works have applied the biased voter model to study cancer initiation and the cancer field effect in epithelial tissue.
These works utilize a ``shape theorem'' due to Bramson and Griffeath, which states that conditional on nonextinction, an advantageous mutant clone eventually assumes a convex shape, whose diameter grows linearly in time \cite{BramsonGriffeath81,BraGri80}.
We refer to \cite{DurMose2015,durrett2016spatial,foo2020mutation} for investigations of the time of cancer initiation under a two-step model of cancer, assuming a small mutant selective advantage (weak selection).
We refer to \cite{FLR2014} for a study of the cancer field effect, and to \cite{foo2020spread} for a study of premalignant field expansion on stacked two-dimensional lattices, which represent multilayered epithelial tissue.

In real epithelial tissue, the coordination between cell proliferation and cell death to maintain tissue homeostasis remains poorly understood \cite{brock2019stem}.
This coordination is difficult to study {\em in situ} due to the rapid clearance of apoptotic cells from epithelial tissue \cite{brock2019stem,fuchs2015live}.
In a recent work, Brock et al.~studied the zebrafish epidermis, chosen for its peripheral location and optical clarity \cite{brock2019stem}.
They induced damage in a subset of epithelial stem cells, and they observed that dying cells generated Wnt8a-containing apoptotic bodies.
The apoptotic bodies were engulfed by neighboring stem cells, which activated Wnt signaling and stimulated proliferation within these cells.
Brock et al.~concluded that ingestion of apoptotic bodies represents a mechanism for maintaining homeostasis in epithelial tissue.
Apoptosis-induced proliferation has also been observed in other organisms in response to cellular stress or tissue damage \cite{fuchs2015live}.
In the biased voter model, it is assumed that cell division precedes cell death (birth-death model).
From a biological standpoint, it is also plausible that balance is maintained through cells first dying and then inducing neighboring cells to divide in their place (death-birth model).

Death-birth spatial models have already received interest in the more general context of evolution on a graph, for example regarding mutant fixation.
Formulations of these models vary depending on whether the mutant fitness advantage affects the death event or the birth event, each of which is biologically relevant \cite{kaveh2015duality}.
In the former case, an individual is first selected to die with probability inversely proportional to its fitness, and a random neighboring individual then reproduces in its place (${\rm D}^{\rm f}{\rm B}$ model).
In the latter case, a random individual is selected to die, and a neighbor is chosen to reproduce with probability proportional to its fitness (${\rm DB}^{\rm f}$ model).
Antal et al.~showed that on degree-heterogeneous graphs, the fixation probability of a single mutant depends on the degree of the node at which the mutant appears, and this dependence is different for ${\rm B}^{\rm f}{\rm D}$ and ${\rm D}^{\rm f}{\rm B}$ dynamics \cite{antal2006evolutionary,sood2008voter}.
Hindersin and Traulsen showed that most small undirected random graphs amplify selection for ${\rm B}^{\rm f}{\rm D}$ dynamics and suppress selection for ${\rm D}{\rm B}^{\rm f}$ dynamics \cite{hindersin2015most}.
Kaveh et al.~\cite{kaveh2015duality}, motivated by Komarova \cite{komarova2006spatial}, investigated the fixation probabilities of more general ${\rm B}^{\rm f}{\rm D}^{\rm f}$ and ${\rm D}^{\rm f}{\rm B}^{\rm f}$ models, where fitness affects one or both of the updating events.
They showed that on a circle, ${\rm BD}^{\rm f}$ and ${\rm DB}^{\rm f}$ models deviate from the Moran fixation probability $\rho$.
They also derived approximations of the fixation probabilities of birth-death and death-birth models on regular two-dimensional lattices.

In this work, we investigate a death-birth analogue of the biased voter model on $\mathbb{Z}^d$,
where fitness affects the birth event (${\rm DB}^{\rm f}$ model).
Our investigation involves
(i) constructing the process mathematically, using tools from the theory of interacting particle systems,
(ii) deriving the associated dual process, which traces the lineages of particles backwards in time, 
(iii) establishing bounds of the survival probability for a single mutant, which confirm that an advantageous mutant always has a positive probability of surviving,
(iv) extending the Bramson-Griffeath shape theorem to death-birth dynamics.
We also briefly discuss other birth-death and death-birth analogues of the biased voter model,
depending on when fitness is applied.
We find that whereas models with ${\rm B}^{\rm f}{\rm D}$ and ${\rm D}^{\rm f}{\rm B}$ dynamics are mathematically equivalent, the ${\rm BD}^{\rm f}$ model has fundamentally different properties from the ${\rm DB}^{\rm f}$ model.
In addition, we show that due to the equivalence of the ${\rm D}^{\rm f}{\rm B}$ model to the biased voter model (${\rm B}^{\rm f}{\rm D}$ model),
our analysis for the ${\rm DB}^{\rm f}$ model extends relatively easily to more general ${\rm D}^{\rm f}{\rm B}^{\rm f}$ dynamics.

The rest of the paper is organized as follows.
In Section \ref{sec:bvm}, we briefly review key properties of the biased voter model on $\mathbb{Z}^d$.
In Section \ref{sec:dbmodeldef}, we formulate our death-birth model on $\mathbb{Z}^d$.
In Section \ref{sec:onedanalysis}, we present complete analysis of this model on $\mathbb{Z}$ ($d=1$).
In Section \ref{sec:gencasegraphical}, we develop a graphical construction of the death-birth model on $\mathbb{Z}^d$, $d>1$, and derive its dual process.
In Section \ref{sec:DBsurvprob}, we derive bounds on the survival probability of the model, starting from a single mutant.
In Section \ref{sec:shapethm}, we extend the Bramson-Griffeath shape theorem to the death-birth model.
In Section \ref{sec:altmodels}, we briefly discuss alternative birth-death and death-birth dynamics on $\mathbb{Z}^d$.
In Section \ref{sec:openproblems}, we present open problems related to the behavior of the death-birth model in the limit of weak selection.
In Section \ref{sec:proofs}, we give proofs that were deferred in earlier sections.

\section{Biased voter model} \label{sec:bvm}

We begin with a brief review of the biased voter model on $\mathbb{Z}^d$ and its main properties.
This discussion will both introduce the basic tools from particle systems theory used in our analysis of the death-birth model and draw a useful basis for comparison between the two models.

\subsection{Model description} \label{sec:bvmdescr}

In the biased voter model (${\rm B}^{\rm f}{\rm D}$ model), two types of particles, type-0 and type-1, are situated on $\mathbb{Z}^d$.
Type-0 particles split into two particles at exponential rate 1, while type-1 particles split at rate $\lambda = 1+\beta$ with $\beta>0$.
A particle splitting can also be thought of as the particle giving birth to a new particle of the same type.
When a particle gives birth, a neighboring particle is chosen to die uniformly at random, and it gets replaced by the new particle.

Say that a particle at $x \in \mathbb{Z}^d$ has $i$ type-0 neighbors and $j$ type-1 neighbors.
If the particle is type-0, it gets replaced by a neighboring type-1 particle at rate $j\lambda/(2d)$.
If it is type-1, it gets replaced by a neighboring type-0 particle at rate $i/(2d)$.
We can think of this as type-0 sites (resp.~type-1 sites) switching to type-1 (resp.~type-0) at rate $j\lambda /(2d)$ (resp.~$i/(2d)$). 

\subsection{Graphical construction} \label{sec:bvmgraphical}

Let ${\cal S}$ denote the set of subsets of $\Z^d$ and let $\overline{\cal S}$ denote the set of finite subsets of $\Z^d$.
Let $\chi_t^A$ denote the set of sites occupied by type-1 particles at time $t$,
given the initial condition $\chi_0^A=A$ with $A \in \overline{\cal S}$.
It is common to think of the type-0 sites as empty and refer to $\chi_t^A$ simply as the set of ``occupied sites''.
The biased voter model admits a simple graphical construction which enables us to define the entire system $\{(\chi_t^A)_{t \geq 0}: A \in \overline{\cal S}\}$ on a common probability space.
The construction is as follows:
For a given site $x \in \Z^d$ and each nearest neighbor $y$ of $x$, draw an arrow from $y$ to $x$ with a ``$\delta$'' at the arrowhead at rate $1/(2d)$, and draw a regular arrow (without the $\delta$) at rate $\beta/(2d)$.
A $\delta$-arrow from $y$ to $x$ kills the particle at $x$ and replaces it with an offspring of the particle at $y$.
A regular arrow from $y$ to $x$ kills the particle at $x$ only if it is type-0.
In Figure \ref{fig:realizationbvm}a, we show the types of arrows that can be drawn to a given site $x$ for the biased voter model on $\mathbb{Z}$.
In Figure \ref{fig:realizationbvm}b, we show the possible states of $x \in \Z$ after an arrow is drawn from $y$ to $x$, depending on the states of $x$ and $y$ immediately before the arrow is drawn.
It is useful to picture type-1 sites as being wet with fluid, the $\delta$'s as dams that block fluid from flowing further, and that arrows allow fluid to flow between neighboring sites in the direction indicated.
In Figure \ref{fig:realizationbvm}c, we show a possible graphical realization of the biased voter model on $\Z$ under this interpretation.
For more information on this graphical construction, we refer to Section 3 of \cite{durrett1988lecture} or Appendix A of \cite{durrett2016spatial}.

\begin{figure}[t]
    \centering
    \includegraphics[scale=0.4]{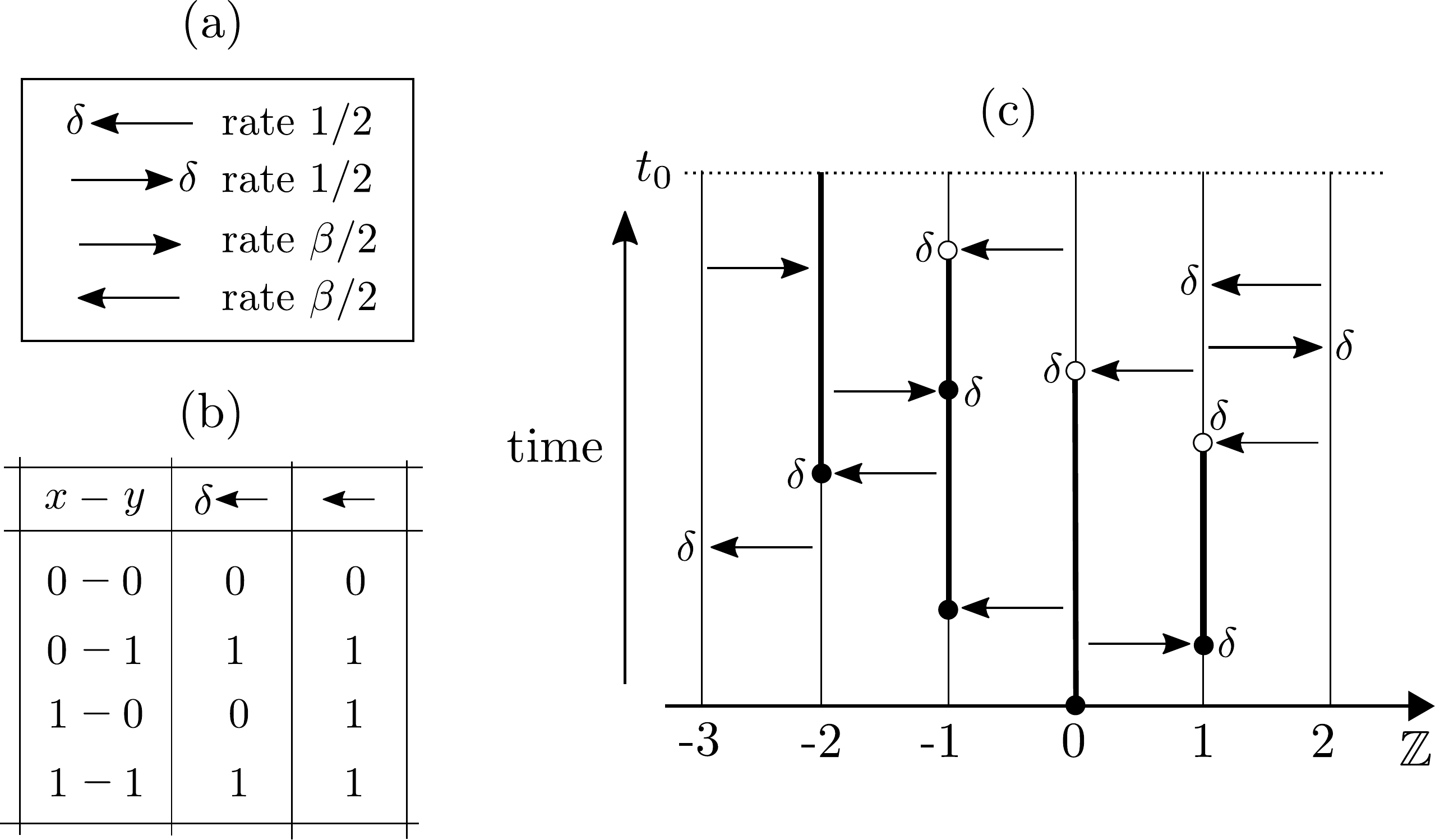}
    \caption{
    The basic tools used in the graphical construction of the biased voter model $(\chi_t)_{t\geq 0}$ on $\Z$.
    {\bf (a)} Types of arrows used in the graphical construction. For a given site $x \in \Z$, a $\delta$-arrow is drawn from each of its two neighbors at rate $1/2$ each, and a regular arrow is drawn at rate $\beta/2$.
    {\bf (b)} State of $x \in \mathbb{Z}$ after an arrow has been drawn from $y$ to $x$, depending on the state of $x$ and $y$ immediately before (leftmost column), and the type of arrow drawn (top row).
    {\bf (c)} Graphical realization of $(\chi_t^0)_{0 \leq t \leq t_0}$ on $\Z$ for some $t_0>0$. Site 0 is wet with fluid at time 0, and the fluid flows along the arrows drawn until it meets a $\delta$-dam. At time $t_0$, only the site $-2$ is wet, so $\chi_{t_0}^0 = \{-2\}$ for this particular realization.}
    \label{fig:realizationbvm}
\end{figure}

\subsection{Dual process} \label{sec:bvmdual}

Associated with the graphical construction is a dual process which traces the lineages of type-1 particles backwards in time. 
In Figure \ref{fig:realizationbvm_dual}, starting with type-1 particles at sites $-2$ and $1$ at time $t_0$ and moving backwards in time, we follow each arrow in the reverse direction.
If a particle encounters a $\delta$-arrow, it jumps to the neighboring site at the other end.
If it encounters a regular arrow, it gives birth to a new particle, which is placed at the neighboring site.
Whenever a particle attempts to occupy an already occupied site, the two particles coalesce.
In this way, we keep track of the possible ancestors of the original particles as we move backwards in time.
For example, for the realization in Figure \ref{fig:realizationbvm_dual}, the ancestors of the type-1 particles at sites $-2$ and $1$ at time $t_0$ must have resided at sites $-2$, $-1$, $0$ or $2$ at time 0.
In the dual process, $\{-2,1\}$ is the initial condition and $\{-2,-1,0,2\}$ is the state of the process at time $t_0$.

The dual process can be more succinctly described as a branching coalescing random walk (BCRW), where particles jump at rate 1 to a randomly chosen neighboring site, and they branch at rate $\beta$, placing an offspring at a randomly chosen neighboring site.
Any time two particles meet at the same site, they coalesce into a single particle.
For $B \in {\cal S}$, let $\eta_t^B$ denote the set of occupied sites in the dual process at time $t$, given initial condition $\eta_0^B = B$.
The relationship between the dual process $(\eta_t^B)_{t \geq 0}$ and the original process $(\chi_t^A)_{t \geq 0}$ is encapsulated in the following duality relation:
\begin{align} \label{eq:dualityrelationbvm}
    \P\big(\chi_t^A \cap B \neq \varnothing\big) = \P\big(\eta_t^B \cap A \neq \varnothing\big), \quad A \in \overline{\cal S}, \; B \in {\cal S}, \; t \geq 0.
\end{align}
This relation can be interpreted as follows: The probability that the forwards-in-time process reaches the set $B$ at time $t$, starting with the set $A$ occupied, is equal to the probability that the ancestry of at least one particle in $B$ at time $t$ can be traced back to the set $A$ at time 0 using the backwards-in-time process.
Both the biased voter model and its dual process are {\em additive} in the sense that for any $A,B \in \overline{\cal S}$ and $C,D \in {\cal S}$,
\begin{align*}
    \chi_t^{A \cup B} = \chi_t^A \cup \chi_t^B \quad\text{and}\quad \eta^{C \cup D}_t = \eta^C_t \cup \eta^D_t, \quad t \geq 0.
\end{align*}

\begin{figure}[t]
    \centering
    \includegraphics[scale=0.4]{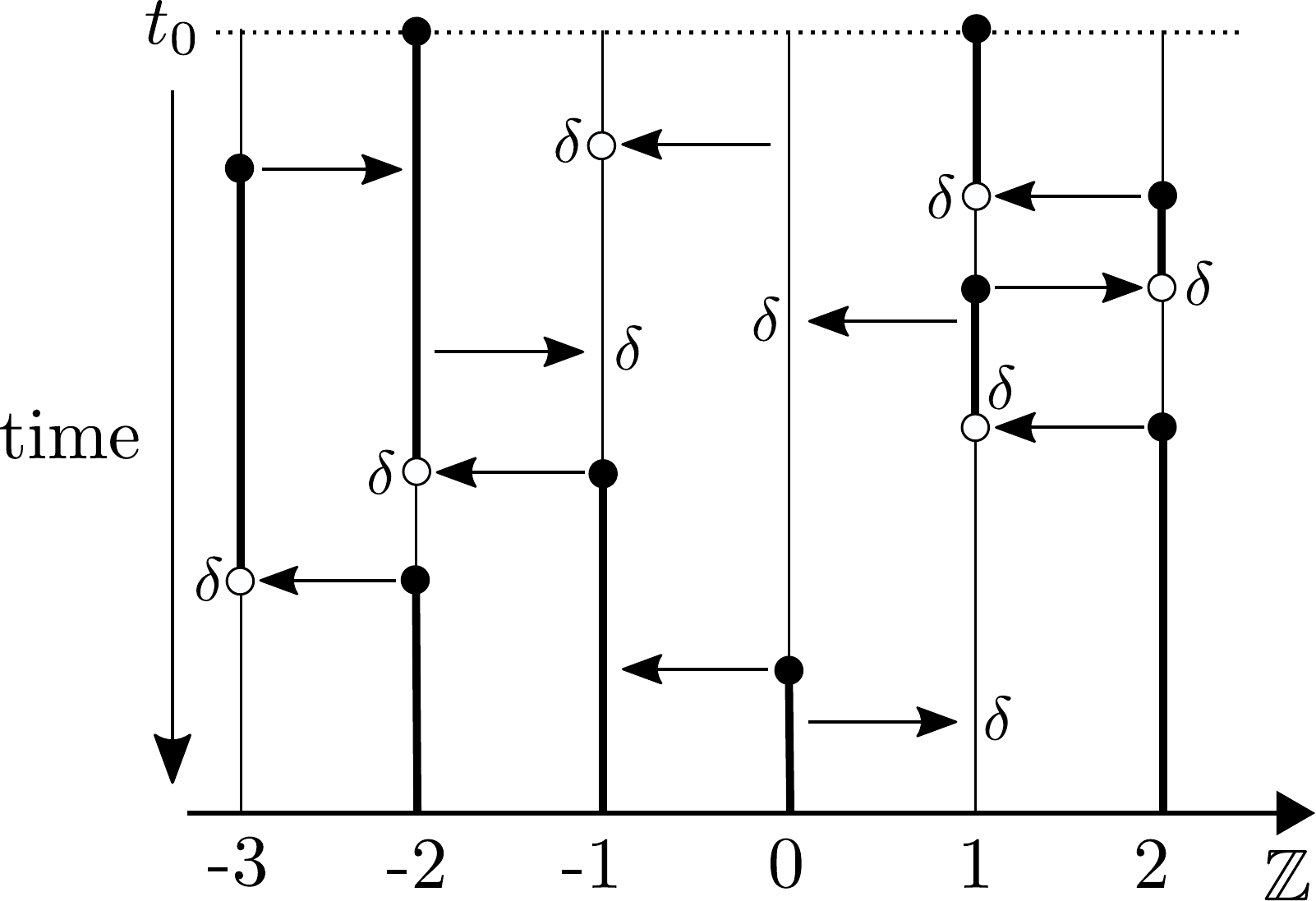}
    \caption{Mechanics of the dual process for biased voter model. If we start with type-1 particles at sites $-2$ and 1 at time $t_0$ and move backwards in time, following each arrow in the reverse direction, we obtain all possible ancestors of these particles at time 0.
    We observe that the ancestors must have resided at sites $-2$, $-1$, $0$ or $2$ at time 0.
    For this particular realization of the dual process, the initial condition is $\{-2,1\}$, and its state at time $t_0$ is $\{-2,-1,0,2\}$.
    }
    \label{fig:realizationbvm_dual}
\end{figure}

\subsection{Survival probability} \label{sec:bvmsurvprob}

Define $\widetilde\tau_\varnothing^A := \inf\{t \geq 0: \chi_t^A = \varnothing\}$, with $\inf \varnothing = \infty$, as the time of extinction of type-1 particles, starting with the set $A$ occupied.
We can compute the probability $\P(\widetilde\tau_\varnothing^A=\infty)$ of type-1 survival by considering the discrete-time jump process $(R_n^A)_{n \geq 0}$ embedded in $(|\chi_t^A|)_{t \geq 0}$, where $|\cdot|$ denotes cardinality.
The jump process is defined by setting $R_0^A := |A|$, and if $\sigma_n$ is the time of the $n$-th jump of $(|\chi_t^A|)_{t \geq 0}$, then $R_n^A := |\chi_{\sigma_n}^A|$.
Note that $(|\chi_t^A|)_{t \geq 0}$ jumps when a type-1 particle gives birth and overtakes a neighboring type-0 particle or vice versa.
For each edge between a type-0 and type-1 particle, the type-1 particle overtakes the type-0 particle at rate $\lambda/(2d)$, and the type-0 particle overtakes the type-1 particle at rate $1/(2d)$.
The type-1 particle wins with probability $\lambda/(1+\lambda)$.
It follows that $(R_n^A)_{n \geq 0}$ is a simple random walk with absorption at 0 and a uniform positive drift $\Delta := (\lambda-1)/(1+\lambda) = \beta/(2+\beta)$ on $\{1,2,\ldots\}$.
Thus, by the gambler's ruin formula and the strong Markov property,
\begin{align} \label{eq:extinctionprobabilitybvm}
    \P(\widetilde\tau_\varnothing^A<\infty) = (1/\lambda)^{|A|}.
\end{align}
The corresponding survival probability is
\begin{align*}
    \P(\widetilde\tau_\varnothing^A=\infty) = 1-(1/\lambda)^{|A|}.
\end{align*}
In particular, the survival probability starting with the origin occupied is
\begin{align} \label{eq:survprobbvm}
    \P(\widetilde\tau_\varnothing^0=\infty) = (\lambda-1)/\lambda = \beta/(1+\beta).
\end{align}
Note that
\begin{align} \label{eq:ordersurvprobbvm}
    \P(\widetilde\tau_\varnothing^0=\infty) \sim \begin{cases} \beta, & \beta \to 0, \\ 1, & \beta \to \infty. \end{cases}
\end{align}
Here, $f(x) \sim g(x)$ as $x \to a$, with $a \in \mathbb{R} \cup \{-\infty,\infty\}$, means that $f(x)/g(x) \to 1$ as $x \to a$.
In other words, in the limit of weak selection ($\beta \to 0$), the survival probability $\P(\widetilde\tau_\varnothing^0=\infty)$ is of order $\beta$, while in the limit of strong selection ($\beta \to \infty$), it converges to 1.
The limit of weak selection is relevant in the context of cancer initiation.
For example, in \cite{bozic2010accumulation}, Bozic et al.~show that data on multiple cancer types are consistent with a multi-stage model of carcinogenesis with a selective advantage of $\beta = 0.004$ per mutational step.

\subsection{Asymptotic shape} \label{sec:bvmshape}

On the event $\{\widetilde\tau_\varnothing^0=\infty\}$, the Bramson-Griffeath shape theorem shows that the set of type-1 particles eventually assumes a convex shape, whose diameter grows linearly in time \cite{BramsonGriffeath81,BraGri80}.
More precisely, there exists a convex subset $\widetilde D$ of $\R^d$ so that for each $\varepsilon>0$,
\begin{align} \label{eq:shapethmbvm}
\P\big(\exists t_*<\infty: (1-\varepsilon)t \widetilde D \cap \mathbb{Z}^d \subseteq \chi_t^0 \subseteq (1+\varepsilon)t \widetilde D,\; t\geq t_* \big| \widetilde\tau_\varnothing^0=\infty\big)=1.
\end{align}
This concludes our review of the biased voter model.

\section{Death-birth model formulation} \label{sec:dbmodeldef}

We are now ready to formulate the death-birth model (${\rm DB}^{\rm f}$ model) on $\mathbb{Z}^d$.
In this model, all particles die at rate 1, 
and the fitness advantage of type-1 particles is incorporated into the subsequent birth event.
Type-0 particles have proliferative fitness 1, and type-1 particles have fitness $\lambda = 1+\beta$ with $\beta>0$.
When a particle dies at $x \in \mathbb{Z}^d$, a neighboring particle is selected to give birth and replace the particle at $x$, with probability proportional to its fitness.

Say that a particle at $x \in \mathbb{Z}^d$ has $i$ type-0  neighbors and $j$ type-1 neighbors.
If $x$ is type-0, the rate of switching to type-1 is $j\lambda/(i+j\lambda)$.
If $x$ is type-1, the rate of switching to type-0 is $i/(i+j\lambda)$.
Recall from Section \ref{sec:bvmdescr} that the corresponding switching rates for the biased voter model are $j\lambda/(2d)$ and $i/(2d)$, respectively.
Note that for both models, the rate of switching for a given site is determined by the number of neighbors of each type.
However, for the death-birth model, the switching rates are no longer linear in $i$, $j$ and $\lambda$.

Let $\xi_t^A$ denote the set of occupied sites at time $t$ under this model, given the initial condition $\xi_0^A = A$ with $A \in \overline{\cal S}$.
Define $\tau_\varnothing^A := \inf \{ t \geq 0: \xi_t^A = \varnothing\}$ 
as the time of extinction of type-1 particles, starting with the set $A$ occupied.
Finally, let $(S_n^A)_{n \geq 0}$ be the discrete-time jump process embedded in $(|\xi_t^A|)_{t \geq 0}$, defined analogously to Section \ref{sec:bvmsurvprob}.

\section{Analysis of death-birth model in one dimension} \label{sec:onedanalysis}

We begin by analyzing the death-birth model in one dimension, that is, on $\Z$.
We develop its graphical construction, determine the associated dual process, compute its survival probability, and determine its asymptotic shape.
The reason we start with the one-dimensional case is threefold.
First, the graphical construction and the dual process are easiest to develop on $\Z$, and the development extends in a straightforward manner to $\Z^d$ for $d>1$.
Second, we can give a complete characterization of the survival probability and asymptotic shape in one dimension, which turns out to be much more difficult for higher dimensions.
Third, the analysis in one dimension is instructive for understanding the complications that arise in higher dimensions.

\subsection{Graphical construction} \label{sec:onedimgraphical}

\begin{figure}[t]
    \centering
    \includegraphics[scale=0.4]{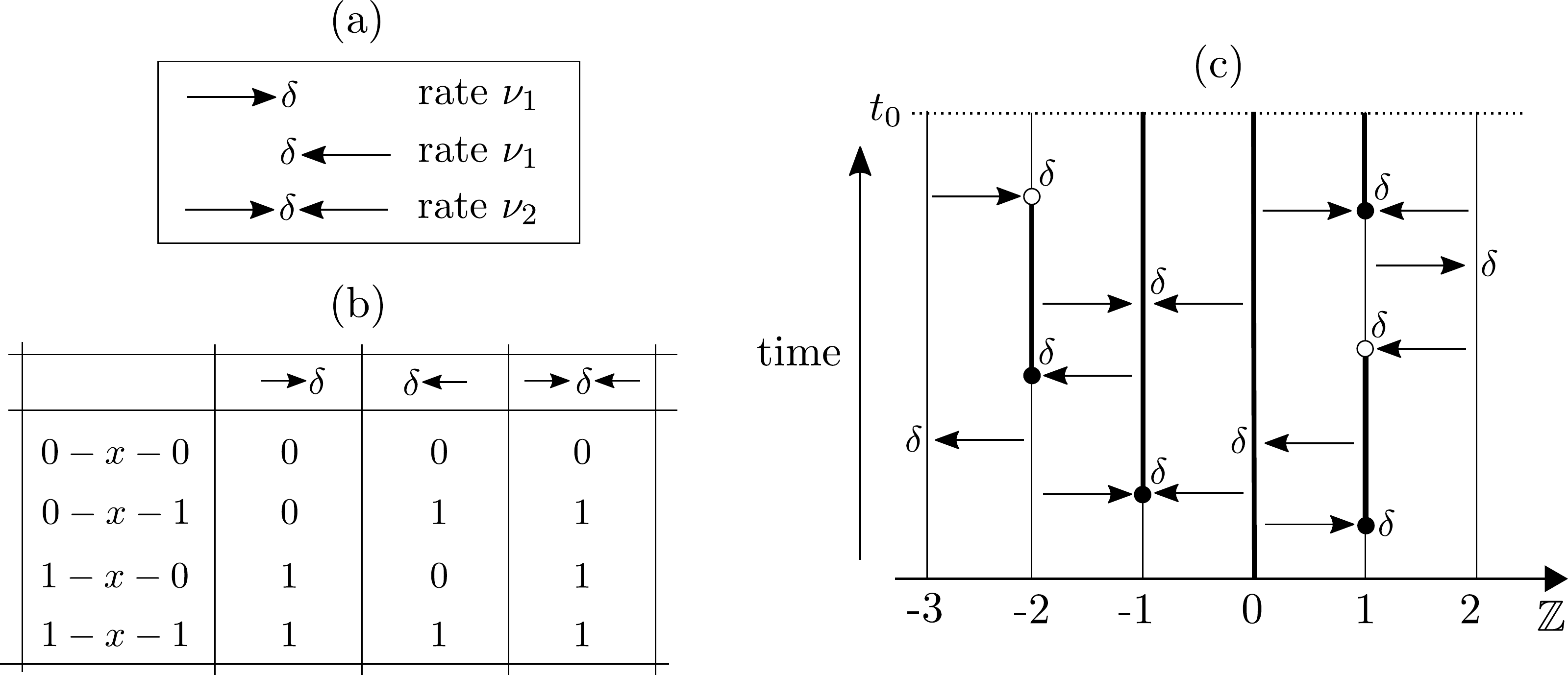}
    \caption{
    The basic tools used in the graphical construction of the death-birth model $(\xi_t)_{t\geq 0}$ on $\Z$.
    {\bf (a)} Types of arrows used in the graphical construction. {\bf (b)} State of $x \in \mathbb{Z}$ after an arrow has been drawn to $x$, depending on the state of its neighbors immediately before (leftmost column), and the type of arrow drawn (top row). {\bf (c)} Graphical realization of $(\xi_t^0)_{t \geq 0}$ on $\mathbb{Z}$ for some $t_0>0$. 
    At time $t_0$, the sites $-1$, $0$ and $1$ are occupied, so $\xi_{t_0}^0 = \{-1,0,1\}$ for this particular realization.}
    \label{fig:realization}
\end{figure}

To construct the system  $\{(\xi_t^A)_{t \geq 0}: A \in \overline{\cal S}\}$ graphically in one dimension, we proceed as follows.
For each $x \in \mathbb{Z}$, draw $\delta$-arrows from $x-1$ to $x$ and from $x+1$ to $x$ at rate $\nu_1$ each, and draw $\delta$-arrows from $x-1$ and $x+1$ to $x$ simultaneously at rate $\nu_2$ (Fig \ref{fig:realization}a).
When an arrow or arrows are drawn to $x$, the $\delta$ kills the particle at $x$, and it gets replaced by a type-1 particle if and only if at least one of the arrows drawn connects $x$ to a type-1 particle.
In Figure \ref{fig:realization}b, we show the possible states of $x \in \mathbb{Z}$ after an arrow is drawn to $x$, depending on the state of its neighbors immediately before the arrow is drawn.
In Figure \ref{fig:realization}c, we show a possible graphical realization of the death-birth model on $\Z$.

We now show that $\nu_1$ and $\nu_2$ can be chosen so that this construction produces the dynamics of the death-birth model on $\mathbb{Z}$.
Note that if a type-0 particle at $x \in \mathbb{Z}$ dies and exactly one of its two neighbors is type-1, $x$ becomes type-1 if and only if a $\delta$-arrow is drawn from this neighbor or both neighbors.
Since under the death-birth model, $x$ should switch from type-0 to type-1 at rate $\lambda/(1+\lambda)$, we must have $\nu_1+\nu_2 = \lambda/(1+\lambda)$.
Additionally, since $x$ should die at total rate 1, we must have $2\nu_1+\nu_2 = 1$.
Together, these expressions yield $\nu_1 = 1/(1+\lambda)$ and $\nu_2 = (\lambda-1)/(1+\lambda)$.
It is then easy to verify that with $\nu_1$ and $\nu_2$ chosen as such, we obtain a simple graphical representation of the death-birth model.

\subsection{Dual process} \label{sec:oneddual}

To obtain the corresponding dual process, we again trace the lineages of type-1 particles backwards in time.
In Figure \ref{fig:realization_dual}, starting with a type-1 particle at site 1 at time $t_0$ and moving backwards in time, we follow each arrow in the reverse direction.
If a particle encounters a single $\delta$-arrow, it jumps to the neighboring site at the other end.
If it encounters a double $\delta$-arrow, it splits into two particles and places one offspring at each neighboring site.
Whenever a particle attempts to occupy a site which is already occupied, it coalesces with the particle there.
For example, for the realization in Figure \ref{fig:realization_dual}, the ancestor of the particle at site 1 at time $t_0$ must have resided at sites $0$ or $2$ at time 0.

The dynamics of the dual process are as follows:
A particle at $x \in \mathbb{Z}$ jumps to $x-1$ at rate $\nu_1$ and to $x+1$ at rate $\nu_1$,
and it splits into two particles which are placed at $x-1$ and $x+1$ at rate $\nu_2$.
Any time two particles meet at the same site, they coalesce into a single particle.
For $B \in {\cal S}$, we let $\zeta_t^B$ denote the set of occupied sites at time $t$ in the dual process, given initial condition $\zeta_0^B = B$. 
We then have the following duality relation between $(\xi_t^A)_{t \geq 0}$ and $(\zeta_t^B)_{t \geq 0}$:
\begin{align} \label{eq:dualitydeathbirth}
\P(\xi_t^A \cap B \neq \varnothing) = \P(\zeta_t^B \cap A \neq \varnothing), \quad A \in \overline{\cal S}, \; B \in {\cal S}, \; t \geq 0.    
\end{align}

\begin{figure}[t]
    \centering
    \includegraphics[scale=0.4]{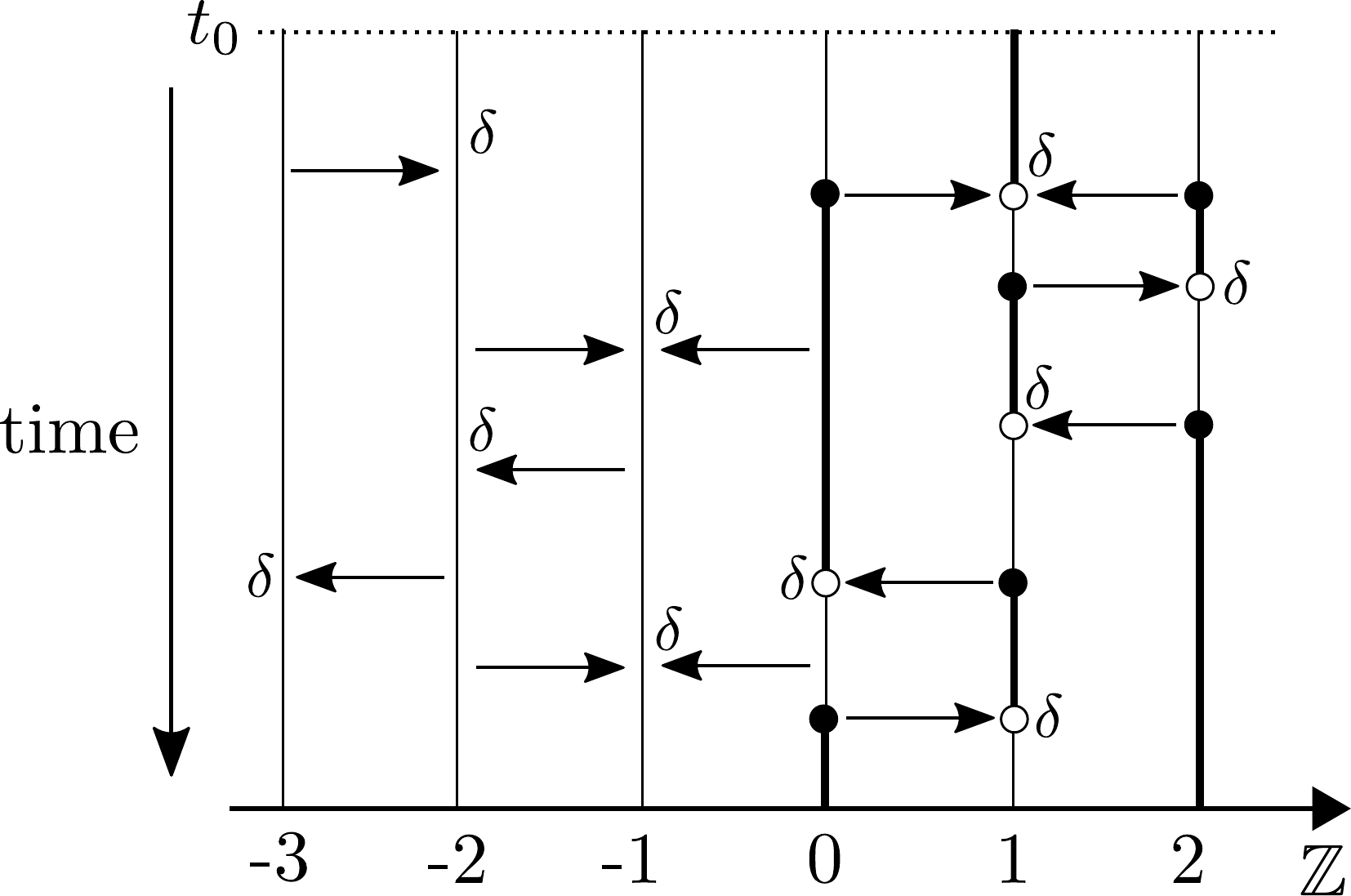}
    \caption{Mechanics of the dual process for the death-birth model. If we start with a type-1 particle at site 1 at time $t_0$ and move backwards in time, following each arrow in the reverse direction, we obtain all possible ancestors of this particle at time 0.
    We observe that the ancestor must have resided at sites 0 or 2 at time 0. For this particular realization of the dual process, the initial condition is $\{1\}$, and its state at time $t_0$ is $\{0,2\}$.}
    \label{fig:realization_dual}
\end{figure}

\subsection{Survival probability} \label{sec:onedsurvprob}

We now analyze the survival probability of $(\xi_t^0)_{t \geq 0}$, starting with the origin occupied.
First, we note that $\xi_t^0$ has the form $\xi_t^0 = \llbracket l_t,r_t\rrbracket$ as long as it survives, and that changes to $\xi_t^0$ occur only at its boundary.
When $l_t = r_t =: x$, then for each neighbor $y \in \{x-1,x+1\}$, $y$ dies and is replaced by an offspring of $x$ at rate $\lambda/(1+\lambda)$, and $x$ dies and is replaced by an offspring of $y$ at rate $1/2$.
The probability that $x$ wins, that is, $x$ replaces $y$ before the opposite occurs, is $2\lambda/(1+3\lambda)$.
When $l_t < r_t$, $r_t+1$ dies and is replaced by an offspring of $r_t$ at rate $\lambda/(1+\lambda)$, and $r_t$ dies and is replaced by an offspring of $r_t+1$ at rate $1/(1+\lambda)$.
The probability that $r_t$ wins is $\lambda/(1+\lambda)$.
The same analysis applies to $l_t$ and $l_t-1$.
It follows that the embedded process $(S_n^0)_{n \geq 0}$ is a simple random walk with absorption at 0, with a positive drift of $(\lambda-1)/(1+3\lambda)$ when $S_n^0=1$ and $(\lambda-1)/(1+\lambda)$ when $S_n^0 \geq 2$.
Recall that the corresponding process for the biased voter model has a uniform drift of $\Delta := (\lambda-1)/(1+\lambda)$ on $\{1,2,\ldots\}$ (Section \ref{sec:bvmsurvprob}).
Since $(\lambda-1)/(1+3\lambda)<\Delta$, it is immediately clear that the survival probability for $(\xi_t^0)_{t \geq 0}$ will be smaller than $\P(\widetilde\tau_\varnothing^0 = \infty) = \beta/(1+\beta)$ for the biased voter model.
To calculate its value, we can use a generalized version of the gambler's ruin formula (Theorem~5.3.11 of~\cite{MR3930614}) to obtain
\begin{align} \label{eq:survprobsol}
    \textstyle \P(\tau_\varnothing^0 = \infty) 
    = {2(\lambda-1)}/({3\lambda-1}) = (2\beta)/(2+3\beta).
\end{align}
This result has appeared in expression (48) of Komarova \cite{komarova2006spatial} and expression (5.3) of Kaveh et al.~\cite{kaveh2015duality}.
Note that
\begin{align*}
    \P(\tau_\varnothing^0=\infty) \sim \begin{cases} \beta, & \beta \to 0, \\ 2/3, & \beta \to \infty. \end{cases}
\end{align*}
In the limit of weak selection $(\beta \to 0)$, the survival probability is of order $\beta$, which is the same as for the biased voter model, see \eqref{eq:ordersurvprobbvm}. 
In the limit of strong selection ($\beta \to \infty$), the survival probability converges to $2/3$, whereas it converges to 1 for the biased voter model.
This reflects the fact that in the death-birth model, particle death precedes particle birth, and the fitness advantage of type-1 particles does not manifest until the latter event.
When the death-birth model is started by a single type-1 particle,
this particle dies as frequently as its type-0 neighbors, which implies that there is always at least 1/3 probability of extinction, independently of $\beta$.
In the limit of strong selection, the survival probability becomes $2/3$.

\subsection{Asymptotic shape} \label{sec:onedshape}

We conclude by deriving the asymptotic shape of the death-birth process on $\Z$ on the event $\{\tau_\varnothing^0=\infty\}$.
In Section \ref{sec:onedsurvprob}, we noted that $\xi_t^0$ has the form $\xi_t^0 = \llbracket l_t,r_t\rrbracket$ as long as it survives, which we used to compute the probability $\P(\tau_\varnothing^0 = \infty) = 2(\lambda-1)/(3\lambda-1)$.
More generally, if $A$ is a contiguous set of sites in $\Z$, then by the gambler's ruin formula and the strong Markov property,
\begin{align} \label{eq:extinctionprobabilityoned}
   \textstyle \P(\tau_\varnothing^A<\infty) = \frac{\lambda+1}{3\lambda-1} \cdot \big(\frac1\lambda\big)^{|A|-1}.
\end{align}
Note that \eqref{eq:extinctionprobabilityoned} is analogous to \eqref{eq:extinctionprobabilitybvm} for the biased voter model, except the latter expression holds for any subset $A$ of $\mathbb{Z}^d$ with $d \geq 1$.
Also recall that whenever $|\xi_t^0|=k$ with $k \geq 2$, the right edge $r_t$ increases at rate $\lambda/(1+\lambda)$ and decreases at rate $1/(1+\lambda)$.
Then, conditional on $\{\tau_\varnothing^0=\infty\}$ and $|\xi_t^0|=k$ with $k \geq 2$, it is straightforward to verify that
\begin{align*}
    r_t \to \begin{cases} r_t+1 & \text{at rate $\frac{\lambda}{1+\lambda} \cdot \big(1-\frac{\lambda+1}{3\lambda-1} (\frac1\lambda)^{k}\big) \cdot \big(1-\frac{\lambda+1}{3\lambda-1} (\frac1\lambda)^{k-1}\big)^{-1}$}, \\ r_t-1 & \text{at rate $\frac1{1+\lambda} \cdot \big(1-\frac{\lambda+1}{3\lambda-1} (\frac1\lambda)^{k-2}\big) \cdot \big(1-\frac{\lambda+1}{3\lambda-1} (\frac1\lambda)^{k-1}\big)^{-1}$}. \end{cases}
\end{align*}
Conditional on $\{\tau_\varnothing^0=\infty\}$ and $|\xi_t^0|=k$ with $k \geq 2$, the net rate at which $r_t$ increases is
\begin{align*}
    \textstyle \frac{\lambda-1}{1+\lambda} \cdot \big(1+\frac{\lambda+1}{3\lambda-1} (\frac1\lambda)^{k-1}\big) \cdot \big(1-\frac{\lambda+1}{3\lambda-1} (\frac1\lambda)^{k-1}\big)^{-1},
\end{align*}
which decreases to $(\lambda-1)/(1+\lambda) = \beta/(2+\beta)$ as $k \to \infty$.
By analyzing $l_t$ similarly, we can show that if $c_1(\beta) := \beta/(2+\beta)$ and $D := [-c_1(\beta),c_1(\beta)]$, then for any $\varepsilon>0$,
\begin{align} \label{eq:shapethemoned}
    \P\big(\exists t_\ast<\infty: (1-\varepsilon)tD \cap \Z \subseteq \xi_t^0 \subseteq (1+\varepsilon)tD, \; t \geq t_\ast \big| \tau_\varnothing^0 = \infty\big) = 1.
\end{align}

\section{Graphical construction for death-birth model on $\mathbb{Z}^d$, $d>1$} \label{sec:gencasegraphical}

In this section, we describe the graphical construction for the system $\{(\xi_t^A)_{t \geq 0}: A \in \overline{\cal S}\}$ on $\mathbb{Z}^d$ with $d>1$ and derive the associated dual process.

\subsection{Graphical construction} \label{sec:gencasegraphicalsubsec}

To construct the death-birth model graphically on $\Z^d$, $d>1$, we follow the construction for the one-dimensional case (Section \ref{sec:onedimgraphical}).
Let ${\cal N}(x)$ be the set of neighbors of $x \in \mathbb{Z}^d$.
For each subset $S \subseteq {\cal N}(x)$ with $|S| = j$, draw $\delta$-arrows from all sites in $S$ to $x$ simultaneously at rate $\nu_j$.
Recall that the $\delta$ kills the particle at $x$, and that $x$ is replaced by a type-1 particle if and only if at least one of the arrows connects $x$ to a type-1 site.

To complete the graphical construction, we need to show that the rates $\nu_1,\ldots,\nu_{2d}$ can be selected so that these dynamics give rise to the death-birth model.
Note first that if a type-1 particle at $x$ dies and it has exactly $i$ type-0 neighbors, $x$ becomes type-0 if and only if all arrows drawn originate from type-0 neighbors.
Since under the death-birth model, $x$ should switch from type-1 to type-0 at rate $i/(i+(2d-i)\lambda)$, we must have
\begin{align} \label{eq:linearsystem}
\textstyle 
\frac{i}{i+(2d-i)\lambda} = \sum_{k=1}^i \binom{i}k \nu_k, \qquad 1 \leq i \leq 2d. 
\end{align}
This system has the unique solution
\begin{align} \label{eq:linearsystemsolution}
\textstyle
\nu_j = \frac{j! (\lambda-1)^{j-1}}{\prod_{k=1}^j (k+(2d-k)\lambda)}, \quad 1 \leq j \leq 2d,
\end{align}
as is verified in Section \ref{app:linearsystem}.
With $\nu_1,\ldots,\nu_{2d}$ chosen as such, we obtain a graphical representation of the death-birth model on $\Z^d$ with $d>1$.

\subsection{Dual process}

As for the one-dimensional case (Section \ref{sec:oneddual}), the dual process is straightforward to describe:
For each $S \subseteq {\cal N}(x)$  with $|S| = j$, a particle at $x$ splits into $j$ particles at rate $\nu_j$, and places its offspring at the sites in $S$.
Any time two particles meet at the same site, the coalesce into a single particle.
As in Section \ref{sec:oneddual}, we let $\zeta_t^B$ denote the set of occupied sites at time $t$ in the dual process, given initial condition $\zeta_0^B = B$, and we obtain the same duality relation \eqref{eq:dualitydeathbirth}.
We also note that both the death-birth model and its dual are additive in the sense that for any $A,B \in \overline{\cal S}$ and $C,D \in {\cal S}$,
\begin{align*}
    \xi_t^{A \cup B} = \xi_t^A \cup \xi_t^B \quad\text{and}\quad \zeta_t^{C \cup D} = \zeta_t^C \cup \zeta_t^D, \quad t \geq 0.
\end{align*}

\section{Survival probability for death-birth model on $\mathbb{Z}^d$, $d>1$} \label{sec:DBsurvprob}

In this section, we analyze the survival probability of the death-birth model $(\xi_t^0)_{t \geq 0}$ on $\mathbb{Z}^d$ with $d>1$, starting with the origin occupied.
We derive lower and upper bounds, which confirm in particular that the survival probability is positive for any fitness advantage $\beta > 0$.

\subsection{Biases of 0--1 edges} \label{sec:biashighdim}

In one dimension, we computed the survival probability for $(\xi_t^0)_{t \geq 0}$ by analyzing the competition along each 0--1 edge between a type-0 and type-1 particle (Section \ref{sec:onedsurvprob}).
Our analysis was simplified by the fact that on $\mathbb{Z}$, $\xi_t^0$ is a contiguous set of sites $\llbracket l_t,r_t\rrbracket$ as long as it survives.
It was therefore sufficient to consider the two 0--1 edges at the boundary of the occupied set.
We observed that both edges show a positive bias toward the type-1 particle whenever $l_t \leq r_t$.
On $\mathbb{Z}^d$ with $d>1$, the configurations of type-1 particles can become more complex, which can give rise to 0--1 edges with no bias toward the type-1 particle, as we now show.

Assume that $\xi_t^0 \neq \varnothing$ and consider neighboring sites $v \in \xi_t^0$ (type-1) and $w \in \mathbb{Z}^d \setminus \xi_t^0$ (type-0). 
If $v$ has $k$ type-0 neighbors, $1 \leq k \leq 2d$, it dies and is replaced by an offspring of $w$ at rate $1/(k+(2d-k)\lambda)$.
If $w$ has $m$ type-1 neighbors, $1 \leq m \leq 2d$, it dies and is replaced by an offspring of $v$ at rate $\lambda/((2d-m)+m\lambda)$.
Irrespective of the values of $k$ and $m$, the edge between $v$ and $w$ always has a nonnegative bias toward $v$, since
\[
\lambda/((2d-m)+m\lambda)-1/(k+(2d-k)\lambda) \geq 1/2d-1/2d = 0.
\]
The bias is minimal when $k=2d$ and $m=2d$, in which case it is 0 (Fig \ref{fig:contribution}a).
For any configuration in which $k+m=2d$, the bias is
\begin{align} \label{eq:balanceddrift}
 \frac{\lambda/(k+m\lambda)-1/(k+m\lambda)}{\lambda/(k+m\lambda)+1/(k+m\lambda)} = \frac{\lambda-1}{1+\lambda} = \Delta,
\end{align}
the uniform drift of the random walk $(R_n^A)_{n \geq 0}$ embedded in the biased voter model (Section \ref{sec:bvmsurvprob}).
The bias toward $v$ is maximal when $k=1$ and $m=1$ (Fig \ref{fig:contribution}b), in which case it is
\begin{align}
\frac{\lambda/({(2d-1)+\lambda})-1/({1+(2d-1)\lambda})}{\lambda/({(2d-1)+\lambda})+1/({1+(2d-1)\lambda})} = \frac{(2d-1)(\lambda^2-1)}{2\lambda+(2d-1)(1+\lambda^2)} \label{eq:maximalcontr} 
> 
\Delta, \quad d>1.
\end{align}
Note that in one dimension, the zero-bias configuration never arises since $\xi_t^0$ is always a contiguous set of sites, and the maximal bias \eqref{eq:maximalcontr} is $\Delta$ for $d=1$.
As a ratio of $\Delta$, \eqref{eq:maximalcontr} becomes
\[
\frac{(2d-1)(1+\lambda)^2}{2\lambda+(2d-1)(1+\lambda^2)},
\]
which is decreasing in $\lambda$ for $\lambda >1$.
For fixed $d$, this ratio converges to $(2d-1)/d$ as $\lambda \to 1$ ($\beta \to 0$; weak selection), which in turn converges to $2$ as $d \to \infty$.
Thus, whereas in one dimension, the bias along a 0--1 edge is always positive and upper bounded by $\Delta$, higher dimensions give rise to 0--1 edges with biases in the range $[0,2\Delta)$.

\begin{figure}[t]
    \centering
    \includegraphics[scale=0.5]{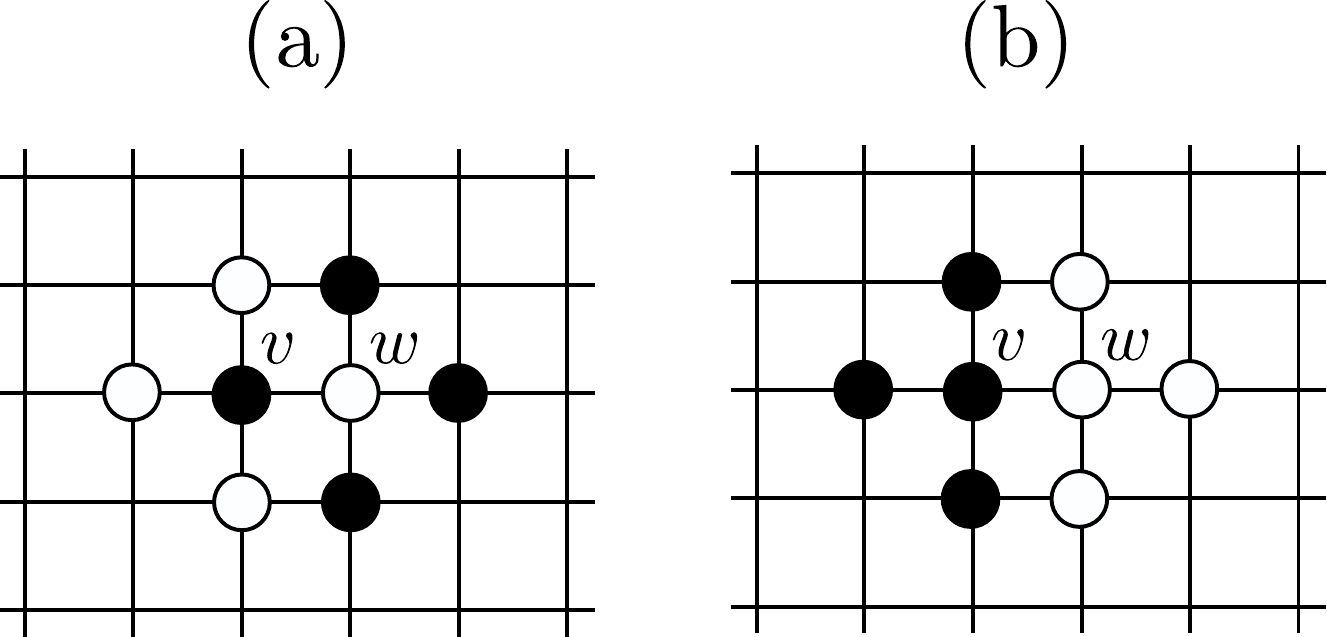}
    \caption{
    Configurations where a 0--1 edge shows minimal and maximal bias toward the type-1 particle, respectively.
    (a) Configuration where the edge between $v$ (type-1 particle) and $w$ (type-0 particle) does not favor the type-1 particle: All neighbors of $v$ are type-0 and all neighbors of $w$ are type-1.
    (b) Configuration where the edge between $v$ and $w$ maximally favors $v$: All neighbors of $v$ except $w$ are type-1, and all neighbors of $w$ except $v$ are type-0.}
    \label{fig:contribution}
\end{figure}

\subsection{Upper bound via embedded process} \label{sec:survprobupperbound}

Consider the jump process $(S_n^0)_{n \geq 0}$ embedded in $(|\xi_t^0|)_{t \geq 0}$.
When the death-birth process $(\xi_t^0)_{t \geq 0}$ is started with the origin occupied, the type-1 particle dies and becomes type-0 at rate 1, and each of the $2d$ type-0 neighbors dies and becomes type-1 at rate $\lambda/((2d-1)+\lambda)$.
The probability that the process survives the first type switch is therefore
\begin{align} \label{eq:p1}
p_1 = (2d\lambda)/\big((2d-1)+(2d+1)\lambda\big).
\end{align}
This is the transition probability $\P(S_{n+1}^0=2|S_n^0=1)$ for the embedded process, which applies any time there is a single type-1 particle in the system.
When $S_n^0 \geq 2$, the biases of 0--1 edges depend on the particular configuration of type-1 particles in the system, which implies that $(S_n^0)_{n \geq 0}$ is not a Markov chain on $\mathbb{Z}^d$ with $d>1$.
We can however obtain a simple upper bound on the survival probability of $(S_n^0)_{n \geq 0}$ by assuming that each 0--1 edge maximally favors the type-1 particle according to \eqref{eq:maximalcontr}.
This yields the transition probabilities
\begin{align} \label{eq:p_iub}
\overline{p}_i = \frac{\lambda\big(1+(2d-1)\lambda\big)}{2\lambda+(2d-1)(1+\lambda^2)}, \quad i \geq 2.    
\end{align}
We thus obtain the following upper bound, with $q_1 = 1-p_1$ and $\overline{q}_k = 1-\overline{p}_k$ for $k \geq 2$.

\begin{proposition} \label{prop:upperbound}
For any $d \geq 2$,
\begin{align}  \label{eq:dbupperbound}
    \P(\tau_\varnothing^0=\infty) &\leq 
    \textstyle \big(1+\frac{q_1}{p_1} \cdot \frac{\overline{p}_2}{\overline{p}_2-\overline{q}_2}\big)^{-1}
    = \big(1+ \frac{((2d-1)+\lambda)(1+(2d-1)\lambda)}{2d(2d-1)(\lambda^2-1)}\big)^{-1} =: \overline{\pi}_1.
\end{align}
\end{proposition}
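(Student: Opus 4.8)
The plan is to compare the embedded jump chain $(S_n^0)_{n\ge0}$ with the homogeneous birth--death chain whose transition probabilities are $p_1$ out of state $1$ and $\overline{p}_2$ out of every state $\ge2$, and to read off the survival probability of the latter from the half-line gambler's ruin formula (Theorem~5.3.11 of \cite{MR3930614}). Two inputs are needed: (i) a configuration-independent bound showing that the conditional probability of an upward jump never exceeds $\overline{p}_2$; and (ii) a comparison argument turning this bound into \eqref{eq:dbupperbound}.

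For (i), fix a nonempty finite set $C\subseteq\Z^d$ of type-1 sites and let $U$ (resp.\ $D$) be the total rate at which some type-0 site flips to type-1 (resp.\ some type-1 site flips to type-0) in the configuration $C$; since every flip changes $|\xi^0|$ by exactly $\pm1$, the next jump of $|\xi^0|$ is upward with conditional probability $u(C):=U/(U+D)$. Let $|E(C)|$ be the number of $0$--$1$ edges of $C$, and write $m(w)$ (resp.\ $k(v)$) for the number of type-1 (resp.\ type-0) neighbours of a type-0 site $w$ (resp.\ a type-1 site $v$), so that $\sum_w m(w)=\sum_v k(v)=|E(C)|$. Using $m(w),k(v)\ge1$ one has the elementary linear bounds
\[
\frac{m(w)\lambda}{(2d-m(w))+m(w)\lambda}\le\frac{\lambda}{(2d-1)+\lambda}\,m(w),\qquad
\frac{k(v)}{k(v)+(2d-k(v))\lambda}\ge\frac{1}{1+(2d-1)\lambda}\,k(v);
\]
summing the first over all type-0 sites and the second over all type-1 sites gives
\[
\big((2d-1)+\lambda\big)\,U\ \le\ \lambda\,|E(C)|\ \le\ \lambda\big(1+(2d-1)\lambda\big)\,D ,
\]
whence $U/D\le\lambda(1+(2d-1)\lambda)/((2d-1)+\lambda)=\overline{p}_2/\overline{q}_2$, i.e.\ $u(C)\le\overline{p}_2$, for every nonempty finite $C$. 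When $|C|=1$ one has $u(C)=p_1$ exactly by the computation leading to \eqref{eq:p1}; in particular $p_1\le\overline{p}_2$.

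For (ii), let $(\overline{S}_n)_{n\ge0}$ be the Markov chain on $\{0,1,2,\dots\}$ with $0$ absorbing, $\overline{S}_0=1$, that steps up from state $1$ with probability $p_1$ and up from every state $m\ge2$ with probability $\overline{p}_2$. A geometric computation shows that its probability $\pi_m$ of never being absorbed from $m$ — the minimal nonnegative solution of $\pi_0=0$, $\pi_1=p_1\pi_2$, $\pi_m=\overline{p}_2\pi_{m+1}+\overline{q}_2\pi_{m-1}$ for $m\ge2$ — equals $\pi_m=1-(\overline{q}_2/\overline{p}_2)^{m-1}(1-\pi_1)$ with $\pi_1=\big(1+(q_1/p_1)(\overline{p}_2/(\overline{p}_2-\overline{q}_2))\big)^{-1}=\overline{\pi}_1$; since $\overline{p}_2>\overline{q}_2$ (as $\overline{p}_2>\tfrac12$ for $\lambda>1$), $(\pi_m)_{m\ge1}$ is strictly increasing with $\pi_m\uparrow1$. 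Let $(Z_n)_{n\ge0}$ be the configuration chain started from $\{0\}$ (with $Z_n=\varnothing$ after absorption) and put $M_n:=\pi_{|Z_n|}$. The recursion for $(\pi_m)$ gives, when $|Z_n|=m\ge1$,
\[
\E\big[M_{n+1}\mid Z_0,\dots,Z_n\big]-M_n=\big(u(Z_n)-\rho_m\big)\,(\pi_{m+1}-\pi_{m-1}),
\]
where $\rho_1=p_1$ and $\rho_m=\overline{p}_2$ for $m\ge2$; since $u(Z_n)=p_1$ when $m=1$ and $u(Z_n)\le\overline{p}_2$ when $m\ge2$ by (i), and $\pi_{m+1}-\pi_{m-1}>0$, the right side is $\le0$, so $(M_n)$ is a bounded supermartingale. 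Finally, the downward-jump probability at every configuration is $\ge\overline{q}_2>0$ (for single sites this uses $p_1\le\overline{p}_2$), so from a configuration of size $m$ the chain reaches $\varnothing$ within $m$ further steps with probability $\ge\overline{q}_2^{\,m}>0$; a Borel--Cantelli argument via the strong Markov property then shows that on $\{\tau_\varnothing^0=\infty\}$ the size $|Z_n|$ visits no value infinitely often, i.e.\ $|Z_n|\to\infty$, so $M_n\to\indicator\{\tau_\varnothing^0=\infty\}$ a.s. By bounded convergence, $\P(\tau_\varnothing^0=\infty)=\E[\lim_n M_n]\le\E[M_0]=\pi_1=\overline{\pi}_1$. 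The explicit form in \eqref{eq:dbupperbound} then follows from $q_1/p_1=((2d-1)+\lambda)/(2d\lambda)$ and $\overline{p}_2/(\overline{p}_2-\overline{q}_2)=\lambda(1+(2d-1)\lambda)/((2d-1)(\lambda^2-1))$.

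The main obstacle is step (i). In the biased voter model the flip rate of a boundary site is linear in its number of opposite-type neighbours and each update touches a single edge, so the analogous embedded process is a genuine biased random walk; here neither holds — the rates $m\lambda/((2d-m)+m\lambda)$ and $k/(k+(2d-k)\lambda)$ are nonlinear in $m,k$, and one flip simultaneously alters the biases of several $0$--$1$ edges — so $(S_n^0)$ is genuinely non-Markovian and cannot be treated edge by edge. What makes the argument close is that the two linearizing constants $\lambda/((2d-1)+\lambda)$ and $1/(1+(2d-1)\lambda)$ are reciprocal in exactly the way that forces the upper bound on $U$ and the lower bound on $D$ to meet at $\lambda|E(C)|$, producing the ratio $\overline{p}_2/\overline{q}_2$ (asymptotically attained by large flat clusters such as solid rectangles); establishing this matching is the crux.
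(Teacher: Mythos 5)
Your proof is correct and takes essentially the same route as the paper: you bound the conditional up-jump probability of the embedded size process by $\overline{p}_2$ (your per-site linearization of the flip rates is equivalent to the paper's observation that each 0--1 edge has bias at most the maximal bias \eqref{eq:maximalcontr}), then compare with the homogeneous chain having up-probabilities $p_1$ and $\overline{p}_2$ and evaluate its survival probability by the gambler's ruin formula. The paper states this comparison informally (``assuming that each 0--1 edge maximally favors the type-1 particle'') without a separate proof, so your supermartingale argument with the harmonic function $\pi_m$, together with the check that $|Z_n|\to\infty$ on survival, simply supplies rigorously the step the paper leaves implicit.
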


It is straightforward to verify that
\begin{align*}
    & \overline{\pi}_1 \sim (2-1/d) \cdot \beta, \quad \beta \to 0, \\
        & \overline{\pi}_1 \to (2d)/(2d+1), \quad \beta \to \infty.
\end{align*}
In the weak-selection limit, the survival probability $\P(\tau_\varnothing^0=\infty)$ is of order at most $(2-1/d) \cdot \beta$.
Recall that in one dimension, the survival probability is of exact order $\beta$ as $\beta \to 0$ (Section \ref{sec:onedsurvprob}), and the same applies to the biased voter model in all dimensions (Section \ref{sec:bvmsurvprob}).
In the strong-selection limit, we obtain the upper bound $(2d)/(2d+1)$.
As we argued for the one-dimensional case (Section \ref{sec:onedsurvprob}),
this upper bound can also be obtained by noting that when the system is started by a single type-1 particle, this particle dies as frequently as its type-0 neighbors, which implies a lower bound of $1/(2d+1)$ on the extinction probability, independently of $\beta$.

\subsection{Lower bound via embedded process} \label{sec:survproblowerbound}

Obtaining a nontrivial lower bound for $\P(\tau_\varnothing^0=\infty)$ using $(S_n^0)_{n \geq 0}$ is not as straightforward, since as we noted in Section \ref{sec:biashighdim}, 0--1 edges can have no bias toward the type-1 particle.
We will show that if we only consider 0--1 edges involving type-0 particles in the unbounded component of $\mathbb{Z}^d \setminus \xi_t^0$, we obtain a minimal positive bias which guarantees that $(S_n^0)_{n \geq 0}$ escapes absorption with positive probability.
First, we need to introduce some notation.

\begin{notation*}
For $A,B \in \overline{\cal S}$
and $x, y \in \bZ^d$, let $d(x,y) := \|x-y\|_1$ denote the Manhattan distance between $x$ and $y$.
Set $d(x, A) := \inf_{y\in A} d(x,y)$ and $d(A,B) := \inf_{x\in A} d(x,B)$.  Let $\bar\partial A$ denote the exterior vertex boundary of $A$, that is
$$
\bar\partial A = \{ x\in \bZ^d\setminus A : d(x,A) = 1\},
$$
and let $\partial A$ denote the set of edges between $A$ and $\bar\partial A$. 
Let $\bar\partial^\infty A$ denote the set of vertices in $\bar\partial A$ that are in the unbounded component of $\bZ^d \setminus A$, and let $\partial^\infty A$ denote the set of edges between $A$ and $\bar\partial^\infty A$. 
\end{notation*}

If $\xi_t^0 \neq \varnothing$, each edge in $\partial^\infty \xi_t^0$ favors the type-1 particle, since the type-0 particle is not surrounded by type-1 particles.
The rate at which the type-1 particle replaces the type-0 particle exceeds the rate at which the opposite occurs by at least
\begin{align} \label{eq:C1}
C_1 = C_1(d,\lambda) := 
\frac\lambda{1+(2d-1)\lambda}-\frac1{2d} = \frac{\lambda-1}{2d(1+(2d-1)\lambda)}.
\end{align}
For a given configuration $\xi_t^0 = A$ of type-1 particles, a lower bound on the number of edges in $\partial^\infty A$ is obtained by the following isoperimetric inequality.

\begin{lemma}\label{isoperimetric}
Let $d\geq 2$.  Then for any finite set $A\subseteq \bZ^d$,
$$
\abs{\partial^\infty A} \geq \abs{\bar\partial^\infty A} \geq 2|A|^{1-1/d}.
$$
\end{lemma}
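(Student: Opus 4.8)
The plan is to prove the two inequalities separately, treating $\abs{\partial^\infty A} \geq \abs{\bar\partial^\infty A}$ as a routine combinatorial observation and concentrating on the isoperimetric bound $\abs{\bar\partial^\infty A} \geq 2|A|^{1-1/d}$. For the first inequality, every vertex in $\bar\partial^\infty A$ is by definition adjacent to $A$ (it lies in $\bar\partial A$), so it is the endpoint of at least one edge of $\partial A$; moreover, since vertices of $\bar\partial^\infty A$ lie in the unbounded component of $\bZ^d \setminus A$, any such edge actually belongs to $\partial^\infty A$. Hence the map sending each $w \in \bar\partial^\infty A$ to one such incident edge is injective, giving $\abs{\partial^\infty A} \geq \abs{\bar\partial^\infty A}$.

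For the main inequality, I would use the standard projection (Loomis--Whitney / box-counting) argument. Let $\pi_i : \bZ^d \to \bZ^{d-1}$ be the projection that forgets the $i$-th coordinate. The key observation is that for a finite $A$, each ``column'' in direction $e_i$ — i.e.\ each fiber $\pi_i^{-1}(z) \cap (A \cup \bar\partial^\infty A)$ over a point $z$ in $\pi_i(A)$ — contains at least two vertices of $\bar\partial^\infty A$: moving along the column in the $+e_i$ and $-e_i$ directions from any vertex of $A$ in that column, one eventually exits $A$, and the first exit vertex on each side lies in $\bar\partial A$; because it is reached by leaving $A$ and travelling to infinity along a coordinate axis, it lies in the unbounded component, hence in $\bar\partial^\infty A$. (The two exit vertices are distinct since they lie on opposite sides.) Summing over the $\abs{\pi_i(A)}$ occupied columns gives $\abs{\bar\partial^\infty A} \geq 2\abs{\pi_i(A)}$ for each $i$, but I need to be a little careful that a single vertex of $\bar\partial^\infty A$ is not double-counted across different columns in the same direction — it is not, since distinct columns are disjoint — nor across the factor of $2$, since the two exit points of a column are distinct. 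Thus $\abs{\bar\partial^\infty A} \geq 2 \max_i \abs{\pi_i(A)} \geq 2\big(\prod_{i=1}^d \abs{\pi_i(A)}\big)^{1/d}$, and now the discrete Loomis--Whitney inequality $|A| \leq \prod_{i=1}^d \abs{\pi_i(A)}^{1/(d-1)}$, i.e.\ $\prod_{i=1}^d \abs{\pi_i(A)} \geq |A|^{d-1}$, yields $\abs{\bar\partial^\infty A} \geq 2 |A|^{(d-1)/d} = 2|A|^{1-1/d}$, as desired.

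The step I expect to be the main obstacle — or at least the one requiring the most care — is the claim that the first exit vertex of a coordinate column lies in the \emph{unbounded} component of $\bZ^d \setminus A$, not merely in $\bar\partial A$. This is where $\partial^\infty$ genuinely differs from $\partial$: a priori a column could exit $A$ into a bounded pocket. The resolution is that one should not take the first exit vertex but rather follow the column all the way to infinity: since $A$ is finite, the column $\pi_i^{-1}(z)$ meets $A$ in a bounded set, so for $|t|$ large the vertex $z + t e_i$ is outside $A$ and clearly in the unbounded component. Walking back from $t = +\infty$ toward $A$, let $w_i^+(z)$ be the \emph{last} vertex before first hitting $A$; the entire infinite ray $\{z + s e_i : s \geq \text{(coordinate of } w_i^+(z))\}$ misses $A$, so $w_i^+(z)$ is connected to infinity within $\bZ^d \setminus A$ and lies in $\bar\partial^\infty A$. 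Define $w_i^-(z)$ symmetrically; these are distinct and the counting goes through. I would also remark that for $d \geq 2$ the bound is non-trivial (for $d=1$ it would give the weaker statement $\abs{\bar\partial^\infty A} \geq 2$, which is why the hypothesis $d \geq 2$ appears), and that the factor $2$ is what makes the constant in the downstream survival-probability estimate explicit.
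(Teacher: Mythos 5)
Your proposal is correct and follows essentially the same route as the paper: the Loomis--Whitney inequality combined with the observation that each occupied column in a coordinate direction contributes two vertices of $\bar\partial^\infty A$ (the sites just beyond the column's extremes in $A$, which are connected to infinity along the coordinate ray). The only cosmetic difference is that you bound $\abs{\bar\partial^\infty A}$ by $2\max_i\abs{\pi_i(A)}$ and compare with the geometric mean, whereas the paper directly selects one direction $j$ with $\abs{A_j}\geq\abs{A}^{(d-1)/d}$; your extra care about the unbounded-component issue makes explicit a point the paper leaves implicit.
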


\begin{proof}
Section \ref{app:isoperimetric}.
\end{proof}

Assume that $|\xi_t^0|=k$.
Since each 0--1 edge has a nonnegative bias, it follows from \eqref{eq:C1} and Lemma \ref{isoperimetric} that the rate at which $|\xi_t^0| \to |\xi_t^0|+1$ exceeds the rate at which $|\xi_t^0| \to |\xi_t^0|-1$ by at least $2C_1 k^{1-1/d}$.
In addition, the rate at which $|\xi_t^0|$ jumps is trivially upper bounded by $(2d+1)k$.
Therefore,
the embedded process $(S_n^0)_{n \geq 0}$ is stochastically lower bounded by the simple random walk $(X_n)_{n \geq 0}$ on the integers with $X_0=1$ and absorption at zero, with transition probabilities $p_1$ given by \eqref{eq:p1} and
\begin{align*}
    \underline{p}_k = \textstyle \frac12 \big(1 + (2C_1/(2d+1)) k^{-1/d}\big), \quad k \geq 2.
\end{align*}
Set $C_2 := 2C_1/(2d+1)$. By exploiting the positive drift of $(X_n)_{n \geq 0}$, we can establish the following lower bound on the survival probability of $(\xi_t^0)_{t \geq 0}$.

\begin{proposition} \label{prop:lowerbound}
Define $I = I(d) := \int_0^\infty \exp\big({-x^{(d-1)/d}}\big)dx$.
Then, for any $d \geq 2$, 
\begin{align} \label{eq:dblowerbound}
    \textstyle \P(\tau_\varnothing^0 = \infty) \geq \big(
    (1/p_1) + (q_1/p_1) \cdot 
    I \cdot C_2^{-d/(d-1)} 
    \big)^{-1} =: \underline{\pi}_1.
\end{align}
\end{proposition}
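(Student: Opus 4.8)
The plan is to combine the stochastic domination already established before the statement with the classical hitting-probability formula for a birth-death chain, and then reduce everything to one elementary series estimate.

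\emph{Step 1 (reduction).} Since $(S_n^0)_{n\ge 0}$ stochastically dominates the birth-death chain $(X_n)_{n\ge 0}$ with $X_0=1$, absorption at $0$, up-probability $p_1$ at state $1$ and $\underline p_k$ at states $k\ge 2$, a step-by-step coupling gives $X_n\le S_n^0$ for all $n$, hence $\{(\xi_t^0)\text{ dies out}\}=\{(S_n^0)\text{ hits }0\}\subseteq\{(X_n)\text{ hits }0\}$ and therefore $\P(\tau_\varnothing^0=\infty)\ge \P_1\big(X_n\neq 0\ \forall n\big)$. So it suffices to bound the escape probability of $(X_n)$ from below by $\underline\pi_1$.

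\emph{Step 2 (gambler's ruin for $(X_n)$).} Write $\underline q_j:=1-\underline p_j$, $q_1:=1-p_1$, and $\rho_j:=\underline q_j/\underline p_j$ for $j\ge 2$. By the generalized gambler's ruin formula for birth-death chains (e.g.\ Theorem~5.3.11 of~\cite{MR3930614}), with $\gamma_0=1$ and $\gamma_k=\prod_{j=1}^k(q_j/p_j)$ (reading $p_j=\underline p_j$ for $j\ge2$), one has $\P_1(X_n\neq 0\ \forall n)=\big(\sum_{k\ge 0}\gamma_k\big)^{-1}$, provided the series converges. Splitting off the first factor and using $1+q_1/p_1=1/p_1$,
\[
\sum_{k\ge 0}\gamma_k=\frac1{p_1}+\frac{q_1}{p_1}\,T,\qquad T:=\sum_{k\ge 2}\ \prod_{j=2}^k\rho_j .
\]
Thus it is enough to prove the single inequality $T\le I\,C_2^{-d/(d-1)}$, which in particular also gives convergence.

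\emph{Step 3 (the series estimate --- the crux).} Using $\rho_j=\frac{1-C_2 j^{-1/d}}{1+C_2 j^{-1/d}}\le \exp(-2C_2 j^{-1/d})$ (valid since $C_2 j^{-1/d}<1$), we get $\prod_{j=2}^k\rho_j\le \exp\!\big(-2C_2\sum_{j=2}^k j^{-1/d}\big)$. Bounding the inner sum below by $\sum_{j=2}^k j^{-1/d}\ge\int_2^{k+1}x^{-1/d}\,dx=\frac{d}{d-1}\big((k+1)^{(d-1)/d}-2^{(d-1)/d}\big)$, and then comparing the summand (decreasing in $k$) to an integral and performing the change of variables $w=(\tfrac{2C_2 d}{d-1})^{d/(d-1)}y$, the bound collapses to $\int_0^\infty e^{-w^{(d-1)/d}}dw=I$ up to a prefactor $K(d,\lambda)=e^{\frac{2C_2 d}{d-1}2^{(d-1)/d}}\big(\tfrac{d-1}{2d}\big)^{d/(d-1)}$. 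One checks $K(d,\lambda)\le 1$: the factor $\big(\tfrac{d-1}{2d}\big)^{d/(d-1)}\le (1/2)^{d/(d-1)}\le 1/2$, while the crude bound $C_2<\frac1{d(2d+1)(2d-1)}\le\frac1{30}$ forces the exponential factor below $e^{4/15}<1.31$, so $K(d,\lambda)<0.66<1$. Hence $T\le I\,C_2^{-d/(d-1)}$, and combining with Step 2 gives $\sum_{k\ge0}\gamma_k\le \frac1{p_1}+\frac{q_1}{p_1}I\,C_2^{-d/(d-1)}=1/\underline\pi_1$, so $\P(\tau_\varnothing^0=\infty)\ge\P_1(X_n\neq0\ \forall n)\ge\underline\pi_1$.

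The main obstacle is Step 3: it is entirely elementary but requires some care in the integral comparisons --- in particular, absorbing the small-$k$ terms via the subtracted constant $2^{(d-1)/d}$ and verifying that the leftover prefactor $K(d,\lambda)$ really is $\le 1$, so that the bound comes out as the clean constant $I\,C_2^{-d/(d-1)}$ rather than with an extra dimension-dependent factor. Steps~1 and~2 are routine given the setup preceding the statement.
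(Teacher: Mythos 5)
Your proof is correct and takes essentially the same route as the paper's: stochastic lower-bounding of $(S_n^0)_{n\ge0}$ by the walk $(X_n)_{n\ge0}$, the generalized gambler's ruin formula, and then the key series estimate $\sum_{\ell\ge2}\prod_{j=2}^{\ell}(\underline{q}_j/\underline{p}_j)\le I\,C_2^{-d/(d-1)}$ obtained from an exponential bound followed by a sum-to-integral comparison and the change of variables that produces $I$. The only difference is in that last estimate, where you keep the sharper bound $\tfrac{1-x}{1+x}\le e^{-2x}$ and the factor $\tfrac{d}{d-1}$ from the integral comparison and then verify that the leftover prefactor $K(d,\lambda)$ is at most $1$ (using $C_2<\tfrac{1}{d(2d-1)(2d+1)}$), whereas the paper uses the cruder $1-x\le e^{-x}$ together with $\sum_{k=2}^{\ell}k^{-1/d}\ge\ell^{(d-1)/d}$ before the same change of variables, so your handling of this step is, if anything, slightly more careful.
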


\begin{proof}
Section \ref{app:lowerbound}.
\end{proof}

Define
\begin{align*}
    & K_0 = K_0(d) := I^{-1} \cdot \big(2d^2(2d+1)\big)^{-d/(d-1)}, \\
    &\textstyle  K_\infty = K_\infty(d) := \big(\frac{2d+1}{2d}+\frac{I}{2d} \cdot (d(4d^2-1))^{d/(d-1)}\big)^{-1}.
\end{align*}
It is straightforward to verify that
\begin{align*}
    & \underline{\pi}_1 \sim K_0 \cdot \beta^{d/(d-1)}, \quad \beta \to 0, \\
    &\underline{\pi}_1 \to \textstyle K_\infty, \quad \beta \to \infty.
\end{align*}
For the strong selection case, the constant is $K_\infty \approx 0.0022$ for $d=2$ and $K_\infty \approx 0.0042$ for $d=3$.
For the weak selection case, we get a lower bound of order $\beta^{d/(d-1)}$, which is of order $\beta^2$ for $d=2$ and $\beta^{3/2}$ for $d=3$, compared to the order $\beta$ upper bound established in Proposition \ref{prop:upperbound}.
The lower and upper bounds become of similar order when $d$ is large.

We conclude by remarking that the lower bound in \eqref{eq:dblowerbound} confirms that the event $\{\tau_\varnothing^0 = \infty\}$ has a positive probability for any $\beta >0$.
We state this as a corollary below.

\begin{corollary} \label{cor:survival}
For any $\beta >0$ and $d\geq 2$, $\P(\tau_\varnothing^0=\infty)>0$.
\end{corollary}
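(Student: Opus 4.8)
The plan is to deduce the corollary directly from Proposition~\ref{prop:lowerbound}: since that proposition gives $\P(\tau_\varnothing^0 = \infty) \geq \underline{\pi}_1$, it suffices to check that the explicit quantity $\underline{\pi}_1$ in \eqref{eq:dblowerbound} is a strictly positive real for every $\beta>0$ and $d\geq 2$. Equivalently, I must verify that the denominator $(1/p_1) + (q_1/p_1)\cdot I \cdot C_2^{-d/(d-1)}$ is finite and strictly positive, which amounts to inspecting the three ingredients $p_1$, $I=I(d)$ and $C_2$ in turn.

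First, from \eqref{eq:p1} we have $p_1 = (2d\lambda)/\big((2d-1)+(2d+1)\lambda\big)$ with $\lambda = 1+\beta > 1$, so $p_1 \in (0,1)$ and hence $q_1 = 1-p_1 \in (0,1)$; in particular $1/p_1$ and $q_1/p_1$ are finite and positive. Second, by \eqref{eq:C1} we have $C_1 = C_1(d,\lambda) = (\lambda-1)/\big(2d(1+(2d-1)\lambda)\big) > 0$ because $\lambda>1$, so $C_2 = 2C_1/(2d+1) > 0$ and therefore $C_2^{-d/(d-1)} \in (0,\infty)$. Third, I need $0 < I(d) < \infty$: positivity is immediate since the integrand $\exp\big(-x^{(d-1)/d}\big)$ is everywhere positive, and finiteness holds because for $d\geq 2$ the exponent satisfies $(d-1)/d \geq 1/2 > 0$, so $x^{(d-1)/d} \to \infty$ as $x\to\infty$ and the integral converges — for instance by the comparison $\exp\big(-x^{(d-1)/d}\big) \leq \exp\big(-x^{1/2}\big)$ for $x\geq 1$. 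This is the only place where the hypothesis $d\geq 2$ is genuinely used (at $d=1$ the exponent is $0$ and the integral diverges, consistent with the separate exact treatment of the one-dimensional case).

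Combining the three observations, the denominator in \eqref{eq:dblowerbound} is a finite positive number, hence $\underline{\pi}_1 > 0$, and therefore $\P(\tau_\varnothing^0 = \infty) \geq \underline{\pi}_1 > 0$, which is the claim. I do not expect any real obstacle here — the statement is essentially a check that the bound of Proposition~\ref{prop:lowerbound} does not degenerate. If a self-contained argument avoiding the explicit constant were preferred, one could instead note that the embedded process $(S_n^0)_{n\geq 0}$ is stochastically bounded below by the nearest-neighbor birth-death chain $(X_n)_{n\geq 0}$ with up-probability $\underline{p}_k = \frac12\big(1 + C_2 k^{-1/d}\big) > \frac12$ for all $k\geq 2$ and $p_1>0$ at level $1$; since $1/d < 1$ for $d\geq 2$, this chain is transient (this is exactly the content established in the proof of Proposition~\ref{prop:lowerbound}), so $(S_n^0)_{n\geq 0}$ avoids absorption at $0$ with positive probability. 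Either route yields $\P(\tau_\varnothing^0=\infty)>0$.
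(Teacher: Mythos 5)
Your proposal is correct and matches the paper's reasoning: the corollary is stated there precisely as an immediate consequence of the lower bound $\underline{\pi}_1$ in Proposition \ref{prop:lowerbound} being strictly positive, which is exactly what you verify. Your explicit check that $p_1\in(0,1)$, $C_2>0$ and $0<I(d)<\infty$ for $d\geq 2$ just spells out what the paper leaves implicit, so there is nothing to add.
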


\section{Shape theorem for death-birth model on $\mathbb{Z}^d$, $d>1$} \label{sec:shapethm}

From Corollary \ref{cor:survival}, we know that the death-birth process $(\xi_t^0)_{t \geq 0}$ has a positive survival probability for any $\beta>0$.
The next question is how the process behaves when conditioned on survival.
In this section, we show that the Bramson-Griffeath shape theorem \eqref{eq:shapethmbvm} extends to the death-birth model.
Most properties of the biased voter model used in the proof of \eqref{eq:shapethmbvm} hold for the death-birth model, such as additivity, translation invariance and the strong Markov property.
However, the fact that the dual processes are different, and that the embedded process $(S_n^0)_{n \geq 0}$ is more complex, requires us to make a few adjustments to the proof.

\subsection{The process eventually contains a linearly-expanding ball}

In \cite{BramsonGriffeath81}, Bramson and Griffeath show that when the biased voter model is conditioned on nonextinction, it eventually contains a linearly-expanding ball with probability 1.
In their proof of Proposition 2 of \cite{BramsonGriffeath81}, they make explicit use of the graphical construction for the dual process $(\eta_t)_{t \geq 0}$ of the biased voter model.
Since the dual process $(\zeta_t)_{t \geq 0}$ for the death-birth model has a different structure, we need to modify this part of the proof.

On page 181 of \cite{BramsonGriffeath81}, for a given $x \in \Z^d$, Bramson and Griffeath define a Markov chain $(X_t^x)_{t \geq 0}$ embedded in the dual process of the biased voter model, which has a uniformly positive drift toward the origin whenever $\|X_t^x\| \geq \gamma$ for sufficiently large $\gamma$.
We wish to define an analogous Markov chain embedded in the dual process of the death-birth model.
To that end, suppose that $X_t^x = y \in \mathbb{Z}^d$.
Let $I(y) \subseteq {\cal N}(y)$ be the subset of neighbors $z$ of $y$ so that $\|z-y\|=1$ and {\em either} $\|z\|<\|y\|$ {\em or} $y=(y_1,\ldots,y_d)$ and $z=(z_1,\ldots,z_d)$ satisfy $y_i=0$, $z_i=1$ for some $i$.
Recall that in the graphical construction of $(\zeta_t^x)_{t \geq 0}$, for each subset $S \subseteq {\cal N}(y)$ of neighbors of $y$, $\delta$-arrows are drawn from all sites in $S$ to $y$ at rate $\nu_j$.
Using the graphical construction, we define the transitions that $X_t^x$ makes out of state $y$ as follows:
\begin{itemize}
    \item Suppose that $X_t^x = y$, and that for some $S \subseteq {\cal N}(y)$, $\delta$-arrows are drawn from the sites in $S$ to $y$. If $S \cap I(y) \neq \varnothing$, $X_t^x$ jumps from $y$ to a randomly chosen neighbor in $S \cap I(y)$. If $S \cap I(y) = \varnothing$, $X_t^x$ jumps from $y$ to a randomly chosen neighbor in $S$.
\end{itemize}
By construction, $X_t^x \in \zeta_t^x$ for all $t \geq 0$, so $(X_t^x)_{t \geq 0}$ is embedded in $(\zeta_t^x)_{t \geq 0}$.
We next need to determine the rates at which $X_t^x$ jumps to sites in $I(y)$ versus sites in ${\cal N}(y) \setminus I(y)$.
Note that for $X_t^x$ to jump to a neighbor in ${\cal N}(y) \setminus I(y)$, the set $S$ must only contain sites in ${\cal N}(y) \setminus I(y)$.
Since $|I(y)|=d$, $X_t^x$ jumps to a given neighbor $z \in {\cal N}(y) \setminus I(y)$ at rate
\begin{align*}
    \textstyle 
    \frac1d \sum_{k=1}^d \nu_k \binom{d}k = \frac1d \frac1{1+\lambda},
\end{align*}
where the equality follows from \eqref{eq:linearsystem}.
Since $X_t^x$ jumps out of state $y$ at total rate 1, it jumps to a given neighbor $z \in I(y)$ at rate $\frac1d \frac\lambda{1+\lambda}$.
If we run time at rate $(1+\lambda)/2$, $X_t^x$ jumps to a neighbor $z \in I(y)$ at rate $\lambda/(2d)$, and to a neighbor $z \in {\cal N}(y) \setminus I(y)$ at rate $1/(2d)$.
These are the jump rates for the Markov chain $(X_t^x)_{t \geq 0}$ defined in expression (23) on page 181 of \cite{BramsonGriffeath81}.

By redefining $(X_t^x)_{t \geq 0}$ as above and rescaling time, we can use Bramson and Griffeath's arguments to prove that the death-birth model eventually contains a linearly expanding ball.
Below, we state a refinement of the main result in \cite{BramsonGriffeath81}, which is stated as Proposition 1 in \cite{BraGri80}.
This refinement is needed to establish the existence of an asymptotic shape in Section \ref{sec:BraGri2} below.
Here, $D_r := \{x \in \Z^d: \|x\| \leq r\}$ denotes the Euclidean ball of radius $r$.

\begin{proposition} \label{prop:linexpball}
There exist positive constants $c,A,\alpha,z_0$ so that
\begin{align} \label{eq:linexpball}
\P\big( D_{ct} \subseteq \xi_{t+z^2}^0,\; t\geq 0 \big| \tau_\varnothing^0=\infty\big) \geq 1-Ae^{-\alpha z}, \quad z \geq z_0.
\end{align}
\end{proposition}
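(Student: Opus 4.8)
The plan is to transfer the statement to the dual process using the duality relation \eqref{eq:dualitydeathbirth}, and then to run the argument of Bramson and Griffeath \cite{BramsonGriffeath81,BraGri80} with the modified embedded chain $(X_t^x)_{t\geq0}$ constructed above playing the role of their chain from expression (23). Concretely, the graphical construction couples $(\xi_t^0)$ and $(\zeta_t^x)$ so that $\{x\in\xi_s^0\}=\{0\in\zeta_s^x\}$ for every $x\in\Z^d$ and $s\geq0$, whence
\[
\{D_{ct}\subseteq\xi_{t+z^2}^0\}=\bigcap_{x\,:\,\norm{x}\leq ct}\{0\in\zeta_{t+z^2}^x\}.
\]
It therefore suffices to bound, uniformly in $z\geq z_0$, the probability of the event $\{\tau_\varnothing^0=\infty\}\cap\{\exists\,t\geq0,\ x\ \text{with}\ \norm{x}\leq ct:\ 0\notin\zeta_{t+z^2}^x\}$.

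For fixed $x$ I would split the backwards-time interval $[0,t+z^2]$ into two phases. \emph{Phase 1:} the chain $(X_s^x)_{s\geq0}$, which satisfies $X_s^x\in\zeta_s^x$, must enter the fixed ball $D_\gamma$. After rescaling time at rate $(1+\lambda)/2$, $(X_s^x)$ has the jump rates of the chain (23) of \cite{BramsonGriffeath81}, so outside $D_\gamma$ its radial part is stochastically dominated by an embedded biased walk with uniform inward drift $v>0$. A standard large-deviation bound for such walks gives, for every sufficiently small $c>0$, constants $A_1,\alpha_1>0$ such that the hitting time $\sigma_x:=\inf\{s\geq0:X_s^x\in D_\gamma\}$ satisfies $\P\big(\sigma_x>\tfrac12(t+z^2)\big)\leq A_1e^{-\alpha_1 t}e^{-\alpha_1 z^2/2}$ whenever $\norm{x}\leq ct$, since then $\{\sigma_x>\tfrac12(t+z^2)\}$ forces the embedded walk to be slow by an amount linear in $t+z^2$. \emph{Phase 2:} on the complementary event, additivity of the dual lets us keep only the descendants of the particle at $X_{\sigma_x}^x\in D_\gamma$; these form a branching--coalescing random walk started inside the fixed ball $D_\gamma$, which must contain the origin after a further $\geq\tfrac12(t+z^2)\geq\tfrac12 z^2$ units of time. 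This purely local estimate is where Bramson and Griffeath's proof cannot be quoted verbatim, since the death-birth dual branches to several sites simultaneously at rates $\nu_j$; but because it still only branches and coalesces (never dies), is irreducible, and has bounded range, their estimate carries over and yields constants $A_2,\alpha_2>0$ with $\P\big(0\notin\zeta_{t+z^2}^x,\ \sigma_x\leq\tfrac12(t+z^2)\ \big|\ \tau_\varnothing^0=\infty\big)\leq A_2e^{-\alpha_2 z}$, the conditioning on survival being handled --- as in \cite{BramsonGriffeath81} --- by coupling the backwards dual near the origin with the surviving forwards cluster and absorbed into the constants using Corollary \ref{cor:survival}.

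Finally I would assemble the per-$(x,t)$ bounds. Summing over the $O(t^d)$ sites $x$ with $\norm{x}\leq ct$ and over a dyadic family of time scales $t\in[2^k,2^{k+1})$ --- using within each block that $s\mapsto\xi_s^0$ is càdlàg and changes one site at a time, so only a discrete set of times per block need be controlled, exactly as in \cite{BramsonGriffeath81} --- the factor $e^{-\alpha_1 t}$ from Phase 1 dominates the polynomial $t^d$ once $c$ is chosen small enough, so the resulting series converges and gives \eqref{eq:linexpball} with suitable $A,\alpha,z_0$.

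I expect the main obstacle to be the Phase 2 local covering estimate together with its interaction with the conditioning $\{\tau_\varnothing^0=\infty\}$: this is precisely where the non-linear, multi-site structure of the death-birth dual (the rates $\nu_j$) enters and where the backwards and forwards pictures near the origin must be coupled. Everything else --- the duality reduction, the Phase 1 hitting estimate, and the union bounds --- is routine once the rescaled chain $(X_s^x)$ has been matched to the chain (23) of \cite{BramsonGriffeath81}.
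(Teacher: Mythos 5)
Your proposal follows essentially the same route as the paper: the paper's entire argument for this proposition is to build the embedded chain $(X_t^x)$ inside the death-birth dual, observe that after running time at rate $(1+\lambda)/2$ its jump rates coincide with those of the chain (23) of Bramson--Griffeath, and then invoke their proof (in the refined form of Proposition 1 of \cite{BraGri80}) unchanged, together with additivity, duality and the positive survival probability from Corollary \ref{cor:survival}. Your two-phase reconstruction of that argument (drift-to-the-ball estimate, local estimate near the origin coupled with the surviving forward cluster, union bounds over sites and dyadic time blocks) is a faithful, somewhat more detailed account of the same strategy, so there is nothing substantively different to compare.
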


\subsection{Existence of an asymptotic shape} \label{sec:BraGri2}

In \cite{DurrettGriffeath82}, Durrett and Griffeath define growth models on $\mathbb{Z}^d$, and they give sufficient conditions for the existence of a shape theorem for such models.
For the death-birth model, two conditions need to be verified, one of which follows immediately from Proposition \ref{prop:linexpball}.
The other condition states that there exist positive constants $\gamma, c, C, p$ such that
\begin{align*}
    \P(t < \tau_\varnothing^0 < \infty) \leq C e^{-\gamma t^p}, \quad t \geq 0.
\end{align*}
This condition follows from Lemmas \ref{lemma:extinctionproblargesetdb} and \ref{lemma:cannothangaroundatsmallsize} below, as we show in Proposition \ref{prop:shapethmdb}.

In Lemma \ref{lemma:extinctionproblargesetdb}, we show that if the death-birth process is started by $k$ type-1 particles, its extinction probability decreases exponentially fast in $k^p$ for some $p>0$, independently of how the initial type-1 particles are configured.
This property is easy to establish for the biased voter model using the embedded random walk $(R_n^0)_{n \geq 0}$, as we demonstrated in \eqref{eq:extinctionprobabilitybvm} of Section \ref{sec:bvmsurvprob}.
The property also holds trivially for the death-birth model in one dimension, due to the simple geometry of the occupied set, see \eqref{eq:extinctionprobabilityoned} of Section \ref{sec:onedshape}.

\begin{lemma} \label{lemma:extinctionproblargesetdb}
There are constants $C,\gamma>0$ so that
\[
\textstyle \sup_{A \in \overline{\cal S}, \abs{A} = k} \P(\tau_\varnothing^A < \infty) \leq C \exp\big(\!-\!\gamma k^{(d-1)/d}\big), \quad k \geq 1.
\]
\end{lemma}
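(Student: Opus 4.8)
The plan is to reduce the statement --- uniformly in the shape of $A$ --- to a gambler's-ruin estimate for the biased random walk with up-probabilities $\underline p_m = \tfrac12\big(1+C_2 m^{-1/d}\big)$ of Section~\ref{sec:survproblowerbound}, started from $k$ rather than from $1$. Fix $A\in\overline{\cal S}$ with $|A|=k$ and consider the embedded jump chain $(S_n^A)_{n\geq 0}$. Whenever the current configuration $\xi_{\sigma_n}^A = B$ is finite, nonempty with $|B|=m$, the edge-by-edge analysis of Section~\ref{sec:biashighdim} shows that every $0$--$1$ edge has nonnegative bias toward the type-$1$ particle while every edge in $\partial^\infty B$ has bias at least $C_1$; together with Lemma~\ref{isoperimetric} this gives that the net rate at which $|\xi_t^A|$ increases exceeds the net rate at which it decreases by at least $2C_1 m^{1-1/d}$, while the total jump rate is at most $(2d+1)m$. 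Hence, conditionally on the configuration at step $n$ and on $S_n^A = m$, the chain moves up with probability at least $\underline p_m$ (this also holds at $m=1$), and crucially this lower bound depends only on $m$, not on $B$.

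To convert this into a bound on hitting probabilities, put $r_i := (1-C_2 i^{-1/d})/(1+C_2 i^{-1/d})\in(0,1)$ and $\phi(m) := \sum_{j\geq m}\prod_{i=1}^{j} r_i$ (empty product $=1$). Then $\phi$ is finite, strictly decreasing, vanishes at infinity, and is harmonic for the birth--death walk with up-probabilities $\underline p_m$. Since $\phi$ is decreasing and $(S_n^A)$ is at least as up-biased as that walk at every state, $m\mapsto p\,\phi(m+1)+(1-p)\phi(m-1)$ is decreasing in $p$, so $\phi(S_{n\wedge\tau_0}^A)$ is a bounded supermartingale, where $\tau_0$ is the first hitting time of $0$. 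Optional stopping (using that on $\{\tau_0=\infty\}$ the limit of $\phi(S_n^A)$ is nonnegative) yields
\[
\P(\tau_\varnothing^A<\infty)=\P(\tau_0<\infty)\leq \frac{\phi(k)}{\phi(0)}\leq \sum_{j\geq k}\prod_{i=1}^{j} r_i,
\]
using $\phi(0)\geq 1$. Here one notes routinely that $\{\tau_0<\infty\}$ coincides with $\{\tau_\varnothing^A<\infty\}$: the continuous-time process has a positive jump rate whenever it is nonempty and, its rate at size $m$ being at most $(2d+1)m$, it cannot explode.

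It remains to estimate the tail. From $\log\frac{1-x}{1+x}\leq -2x$ on $[0,1)$ and $\sum_{i=1}^{j} i^{-1/d}\geq \int_1^{j+1}x^{-1/d}\,dx\geq \tfrac{d}{d-1}\big(j^{(d-1)/d}-1\big)$ one gets $\prod_{i=1}^{j} r_i\leq e^{\gamma'}\exp\big(-\gamma' j^{(d-1)/d}\big)$ with $\gamma' = 2C_2 d/(d-1)$; bounding $\exp(-\gamma' j^{(d-1)/d})\leq \exp(-\tfrac{\gamma'}{2} k^{(d-1)/d})\exp(-\tfrac{\gamma'}{2} j^{(d-1)/d})$ for $j\geq k$ and summing the convergent series $\sum_{j\geq 1}\exp(-\tfrac{\gamma'}{2} j^{(d-1)/d})$ gives $\P(\tau_\varnothing^A<\infty)\leq C\exp\big(-\gamma k^{(d-1)/d}\big)$ with $\gamma = C_2 d/(d-1)$ and $C$ depending only on $d$ and $\lambda$; enlarging $C$ if necessary covers $k=1$. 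Taking the supremum over $A$ with $|A|=k$ finishes the proof, the constants being configuration-free.

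The point requiring care --- more a subtlety than a genuine obstacle, since the edge-bias computation, Lemma~\ref{isoperimetric}, and the comparison walk are already in place --- is the comparison itself: the birth--death kernel with up-probabilities $\underline p_m$ decreasing in $m$ is \emph{not} stochastically monotone, so one cannot simply couple $(S_n^A)$ pathwise above the reference walk; comparing hitting probabilities via the harmonic function $\phi$ sidesteps this. The tail estimate is where the exponent $(d-1)/d$ enters (as opposed to the exponential-in-$k$ decay available for the biased voter model), reflecting the $m^{-1/d}$ decay of the drift as the occupied set grows.
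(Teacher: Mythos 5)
Your proof is correct, and its first half — bounding the conditional up-probability of the embedded chain $(S_n^A)$ at size $m$ by $\underline p_m=\tfrac12(1+C_2m^{-1/d})$ via the edge-bias computation, Lemma \ref{isoperimetric}, and the total-rate bound $(2d+1)m$ — is exactly the reduction the paper makes: the paper compares $(S_n^A)$ started from $|A|=k$ with the same reference walk $(X_n^k)$ from the proof of Proposition \ref{prop:lowerbound}. Where you diverge is in how the ruin probability from level $k$ is estimated. The paper asserts the pathwise stochastic domination $X_n^k\le S_n^A$ and then bounds $\P(X_m^k=0\text{ for some }m)$ by a level-crossing decomposition: the strong Markov property gives a product $\prod_{\ell\le k}\P(T_{\ell-1}^\ell<\infty)$, and a gambler's-ruin computation shows each factor is at most $1-C_4\ell^{-1/d}$, yielding $\exp(-C_4k^{(d-1)/d})$. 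You instead use the explicit decreasing scale function $\phi$ of the reference walk as a supermartingale for $(S_n^A)$ itself, apply optional stopping, and estimate the tail of $\phi(k)$ directly; this is somewhat more streamlined and gives explicit constants, and it cleanly handles the fact that $(S_n^A)$ is not Markov (you only need the conditional drift bound given the configuration). One remark on the point you flag as requiring care: the lack of stochastic monotonicity of the reference kernel is not actually an obstacle to the paper's coupling. Since both processes are nearest-neighbor (skip-free) in the size variable and start at the same value $k$, they keep the same parity while both are alive, so order can only be threatened when they are equal; at such times one couples the two steps with a common uniform, using that the conditional up-probability of $(S_n^A)$ given its configuration dominates $\underline p_m$ state by state. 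So the pathwise domination the paper invokes is legitimate, and your harmonic-function detour, while perfectly valid and self-contained, is an alternative rather than a necessary repair.
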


\begin{proof}
Section \ref{app:shapethm1}.
\end{proof}

In Lemma \ref{lemma:cannothangaroundatsmallsize}, we show that the probability that the death-birth process remains alive at a size smaller than order $t^{d/(d+1)}$ up until time $t$ decreases exponentially fast in $t^{(d-1)/(d+1)}$.
Again, a similar property is easy to establish for the biased voter model using the embedded random walk $(R_n^0)_{n \geq 0}$.
In that case, $t^{d/(d+1)}$ and $t^{(d-1)/(d+1)}$ can be replaced by $t$.

\begin{lemma} \label{lemma:cannothangaroundatsmallsize}
For sufficiently small $\varepsilon>0$, there are constants $C,\gamma>0$ so that
\[
\P\Big(|\xi_s^0| \in \big(0,\varepsilon t^{d/(d+1)}\big), \; s \leq t\Big) \leq C \exp\big(\!-\!\gamma t^{(d-1)/(d+1)}\big), \quad t \geq 0.
\]
\end{lemma}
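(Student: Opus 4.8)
The plan is to show that, once conditioned to stay alive, the death--birth process is pushed to grow so strongly that it cannot remain below size $K:=\varepsilon t^{d/(d+1)}$ for a time of order $t$. Two inputs will be used, the first being a consequence of the analysis in Section~\ref{sec:survproblowerbound}. \emph{(i) Upward bias at small sizes.} By Lemma~\ref{isoperimetric} and~\eqref{eq:C1}, for any configuration of $k\ge 1$ occupied sites the rate at which $|\xi_t^0|$ increases exceeds the rate at which it decreases by at least $2C_1 k^{1-1/d}$, while the total rate at which $|\xi_t^0|$ changes is at most $(2d+1)k$; hence, given a configuration of size $k$, the next jump of the embedded chain $(S_n^0)$ is upward with conditional probability at least $\underline{p}_k=\tfrac12(1+C_2 k^{-1/d})$ for $k\ge 2$, and at least $p_1$ for $k=1$. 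Since $\underline{p}_k$ is decreasing in $k$, for every size $k\in(0,K)$ the next jump is upward with conditional probability at least $\tfrac12+\delta$, where $\delta:=\tfrac12 C_2 K^{-1/d}$ (for $k\ge2$ because $\underline{p}_k\ge\underline{p}_K=\tfrac12+\delta$, and for $k=1$ by a direct check of~\eqref{eq:p1}). \emph{(ii) Lower bound on the jump rate.} Every nonempty finite configuration has a type-1 site on its boundary, which dies and is replaced by a type-0 neighbour at rate at least $c_-:=\big(1+(2d-1)\lambda\big)^{-1}>0$, so $|\xi_t^0|$ jumps at rate at least $c_-$ as long as the process is alive. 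The event $E_t:=\{\,|\xi_s^0|\in(0,K)\text{ for all }s\le t\,\}$ to be bounded is empty unless $K>1$, and for $t$ in any fixed bounded range the asserted inequality holds after enlarging $C$; so assume $t$ is large.

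First I would use~\emph{(ii)} to reduce to the event that many jumps occur. Write $N$ for the number of jumps of $|\xi_s^0|$ in $[0,t]$ and set $n_0:=\lceil c_- t/2\rceil$. On $E_t$ the process stays alive, so the successive interjump times are stochastically dominated by independent $\mathrm{Exp}(c_-)$ variables, and a Chernoff bound for a sum of exponentials gives $\P(E_t\cap\{N<n_0\})\le e^{-\alpha t}$ for some $\alpha=\alpha(d,\lambda)>0$. It remains to bound $\P(E_t\cap\{N\ge n_0\})$.

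For that, I would couple the jump directions with an i.i.d.\ sequence from below. Build $(\xi_t^0)$ from independent uniforms $V_1,V_2,\dots$, with $V_n$ deciding the direction of the $n$-th jump (upward iff $V_n$ lies below the current conditional up-probability), and set $B_n:=\indicator\{V_n\le\tfrac12+\delta\}$, so that $(B_n)_{n\ge1}$ are i.i.d.\ $\mathrm{Bernoulli}(\tfrac12+\delta)$ and, by~\emph{(i)}, the $n$-th jump is upward whenever $B_n=1$ and the configuration just before it has size in $(0,K)$. On $E_t\cap\{N\ge n_0\}$ the first $n_0$ jumps all occur while the size lies in $(0,K)$, so the number $U_{n_0}$ of upward jumps among them satisfies $U_{n_0}\ge\sum_{n=1}^{n_0}B_n$; moreover $S_{n_0}^0=1+2U_{n_0}-n_0<K$ forces
\[
\textstyle \sum_{n=1}^{n_0}\big(B_n-\tfrac12\big) \;\le\; U_{n_0}-\tfrac{n_0}{2} \;<\; \tfrac K2 .
\]
The left-hand side is a sum of $n_0$ independent $\{-\tfrac12,\tfrac12\}$-valued terms with mean $n_0\delta$, and $n_0\delta\ge K$ once $\varepsilon\le(\tfrac14 c_- C_2)^{d/(d+1)}$ --- this is the only place where the hypothesis ``$\varepsilon$ sufficiently small'' is used. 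Hence the displayed event is a downward deviation of at least $K/2$ below the mean, and Hoeffding's inequality gives $\P(E_t\cap\{N\ge n_0\})\le\exp\!\big(-K^2/(2n_0)\big)$. Since $n_0\asymp t$ this is $\exp(-\Theta(\varepsilon^2 t^{(d-1)/(d+1)}))$, and since $e^{-\alpha t}\le e^{-\alpha t^{(d-1)/(d+1)}}$ for $t\ge1$, adding the two estimates gives $\P(E_t)\le C\exp\big(-\gamma t^{(d-1)/(d+1)}\big)$ with $\gamma$ of order $\min(\alpha,\varepsilon^2)$.

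The main obstacle is that for $d>1$ the size process $(S_n^0)$ is not Markov and its per-jump upward bias is only of order $K^{-1/d}$, so the statement is genuinely a large-deviation bound: over the $\Theta(t)$ jumps forced to occur in $[0,t]$, the accumulated drift, of order $tK^{-1/d}=\Theta(t^{d/(d+1)})$, must overwhelm the room $K=\varepsilon t^{d/(d+1)}$ available below the threshold, which is precisely why $\varepsilon$ must be taken small. Handling the non-Markovianity through a one-sided coupling with i.i.d.\ Bernoullis (rather than a random-walk comparison, which would not see the cap at $K$) and keeping track of the randomness of $N$ are the two points that need care; the rest is standard Bramson--Griffeath-type bookkeeping.
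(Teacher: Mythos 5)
Your proposal is correct and takes essentially the same approach as the paper's proof: the paper likewise reduces to the event that order-$t$ jumps occur by time $t$ (via domination of the jump count by a rate-one Poisson process, where you use the cruder but equally valid rate bound $c_-$ with a Chernoff estimate), exploits the uniform up-probability $\underline{p}_k\geq\tfrac12\big(1+C_2\,(\varepsilon t^{d/(d+1)})^{-1/d}\big)$ for all sizes below the cap, and finishes with Hoeffding, with the smallness of $\varepsilon$ entering exactly where you place it. Your i.i.d.\ Bernoulli coupling of the jump directions is just a pathwise rendering of the paper's comparison walk $(Z_m)$ stopped at the exit time, so the differences are only in constants and bookkeeping.
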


\begin{proof}
Section \ref{app:shapethm2}.
\end{proof}

Finally, in Proposition \ref{prop:shapethmdb}, we state the shape theorem for the death-birth model.
In the proof, we give a brief outline of how it follows from Proposition \ref{prop:linexpball} and Lemmas \ref{lemma:extinctionproblargesetdb} and \ref{lemma:cannothangaroundatsmallsize}.

\begin{proposition} \label{prop:shapethmdb}
There exists a convex subset $D$ of $\R^d$ so that for each $\varepsilon>0$,
\begin{align*}
    \P\big(\exists t_*<\infty: (1-\varepsilon)tD \cap \mathbb{Z}^d \subseteq \xi_t^0 \subseteq (1+\varepsilon)tD,\; t\geq t_*  \big| \tau_\varnothing^0=\infty\big)=1.
\end{align*}
\end{proposition}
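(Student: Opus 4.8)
The plan is to deduce the statement from the general shape theorem for growth models on $\mathbb{Z}^d$ of Durrett and Griffeath \cite{DurrettGriffeath82}, in exactly the way \eqref{eq:shapethmbvm} is obtained for the biased voter model. As noted at the start of this section, the death-birth model is additive, translation invariant, has finite range, and satisfies the strong Markov property (and it admits the monotone graphical coupling of Section \ref{sec:gencasegraphical}), so the structural hypotheses of that framework are already in place. What remains is to supply the two quantitative inputs required by \cite{DurrettGriffeath82}: (i) conditioned on survival, the process eventually contains a linearly expanding ball, and (ii) on the event of eventual extinction, the extinction time has a stretched-exponential tail.

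Input (i) is precisely Proposition \ref{prop:linexpball}. For input (ii) I must show that there are constants $C,\gamma,p>0$ with $\P(t<\tau_\varnothing^0<\infty)\leq Ce^{-\gamma t^p}$ for all $t\geq 0$. Fix $\varepsilon>0$ small enough for Lemma \ref{lemma:cannothangaroundatsmallsize} to apply, and let $\sigma:=\inf\{s\geq 0:|\xi_s^0|\geq \varepsilon t^{d/(d+1)}\}$, a stopping time. On $\{\sigma>t\}\cap\{t<\tau_\varnothing^0<\infty\}$ the process is nonempty throughout $[0,t]$ yet of size below $\varepsilon t^{d/(d+1)}$ there, so this event is contained in $\{|\xi_s^0|\in(0,\varepsilon t^{d/(d+1)}),\ s\leq t\}$, whose probability is at most $C\exp(-\gamma t^{(d-1)/(d+1)})$ by Lemma \ref{lemma:cannothangaroundatsmallsize}. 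On $\{\sigma\leq t\}$ I condition on $\cF_\sigma$: the random set $\xi_\sigma^0$ has cardinality at least $\varepsilon t^{d/(d+1)}$, so by the strong Markov property and Lemma \ref{lemma:extinctionproblargesetdb},
\[
\P\big(\tau_\varnothing^0<\infty \,\big|\, \cF_\sigma\big)\leq \sup_{|A|=|\xi_\sigma^0|}\P(\tau_\varnothing^A<\infty)\leq C\exp\big(-\gamma(\varepsilon t^{d/(d+1)})^{(d-1)/d}\big)=C\exp\big(-\gamma\varepsilon^{(d-1)/d}t^{(d-1)/(d+1)}\big),
\]
since $\tfrac{d}{d+1}\cdot\tfrac{d-1}{d}=\tfrac{d-1}{d+1}$. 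Adding the two contributions and taking expectations gives (ii) with exponent $p=(d-1)/(d+1)$.

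With (i) and (ii) in hand, the Durrett-Griffeath machinery applies: the first-passage times $t(x):=\inf\{s:x\in\xi_s^0\}$ are (essentially) subadditive, Kingman's subadditive ergodic theorem produces an a.s.\ limit $\|x\|_D:=\lim_{n\to\infty}t(nx)/n$ which defines a norm on $\R^d$, and its closed unit ball $D$ is the convex asymptotic shape; the lower inclusion $(1-\varepsilon)tD\cap\mathbb{Z}^d\subseteq\xi_t^0$ combines additivity with input (i) to fill out $D$ from a single surviving seed, while the upper inclusion $\xi_t^0\subseteq(1+\varepsilon)tD$ follows from the linear bound on the spreading speed implied by the graphical construction. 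I expect the only genuine work beyond invoking \cite{DurrettGriffeath82,BramsonGriffeath81} to be the bookkeeping of the previous paragraph — in particular verifying that the threshold $\varepsilon t^{d/(d+1)}$ is exactly the one that makes Lemmas \ref{lemma:extinctionproblargesetdb} and \ref{lemma:cannothangaroundatsmallsize} yield matching stretched exponentials in $t$ — together with confirming that the non-linearity of the death-birth flip rates does not disturb the monotonicity and additivity that the growth-model framework requires; both of these have already been arranged in the earlier sections, so the argument is essentially routine.
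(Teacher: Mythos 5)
Your proposal is correct and follows essentially the same route as the paper: you verify the two Durrett--Griffeath conditions (in their stretched-exponential form), obtaining the growth condition from Proposition \ref{prop:linexpball} and the extinction-time tail by splitting at the first time the process reaches size $\varepsilon t^{d/(d+1)}$ and combining Lemmas \ref{lemma:extinctionproblargesetdb} and \ref{lemma:cannothangaroundatsmallsize} via the strong Markov property, with the same exponent $p=(d-1)/(d+1)$. Your stopping time $\sigma$ is just the paper's $V_k$ with $k=\lfloor\varepsilon t^{d/(d+1)}\rfloor$, so the decomposition is identical in substance.
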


\begin{proof}
Section \ref{app:shapethm3}.
\end{proof}

\section{Alternative birth-death and death-birth models} \label{sec:altmodels}

So far, we have discussed the biased voter model, where birth precedes death and fitness affects the birth event (${\rm B}^{\rm f}{\rm D}$ model), and the death-birth model, where death precedes birth and fitness affects the birth event (${\rm DB}^{\rm f}$ model).
From a biological standpoint, the analogous models where fitness affects the death event are equally relevant.
For example, in the context of cancer initiation, cells can acquire resistance to apoptosis by loss of p53 function or by upregulation of anti-apoptotic Bcl2 \cite{kaveh2015duality}.
In this section, we briefly discuss these alternative models, including models where fitness affects both the birth event and the death event.

\subsection{${\bf D}^{\bf f}{\bf B}$ model} \label{sec:Dfbmodel}

We first consider a death-birth model where fitness affects the death event (${\rm D}^{\rm f}{\rm B}$ model).
In this model, type-0 particles die at rate 1, and type-1 particles die at rate $\kappa = 1/\lambda$ with $\lambda=1+\beta$ and $\beta>0$.
When a particle dies at $x \in \mathbb{Z}^d$, a neighboring particle is chosen uniformly at random to give birth and replace the particle at $x$.
We can construct this process graphically as follows:
For each $x \in \mathbb{Z}^d$ and each nearest neighbor $y$ of $x$, draw a $\delta$-arrow from $y$ to $x$ at rate $(1/2d)\kappa = 1/(2d\lambda)$, and draw a regular arrow at rate $(1/2d)(1-\kappa) = (\lambda-1)/(2d\lambda)$.
If we run this process at rate $1/\kappa=\lambda$, we obtain the graphical construction of the biased voter model (Section \ref{sec:bvmgraphical}).
The ${\rm D}^{\rm f}{\rm B}$ model is therefore a sped-up version of the ${\rm B}^{\rm f}{\rm D}$ model.

\subsection{${\bf D}^{\bf f}{\bf B}^{\bf f}$ model} \label{sec:DfBfmodel}

We next consider a generalization of the ${\rm D}^{\rm f}{\rm B}$ and ${\rm D}{\rm B}^{\rm f}$ models,
where we allow fitness to affect both the death event and the birth event.
In the ${\rm D}^{\rm f}{\rm B}^{\rm f}$ model, there are two fitness parameters $\beta_1>0$ and $\beta_2>0$.
Type-0 particles die at rate 1 and type-1 particles die at rate $1/\lambda_1 := 1/(1+\beta_1)$.
On the subsequent birth event, type-1 particles have proliferation fitness $\lambda_2 := 1+\beta_2$,
while type-0 particles have fitness $1$.
In other words, upon a death at $x \in \Z^d$, a neighboring particle is selected to give birth and replace the particle at $x$ with probability proportional to its proliferation fitness.

To construct a graphical representation of the ${\rm D}^{\rm f}{\rm B}^{\rm f}$ model, we combine the constructions for the ${\rm D}^{\rm f}{\rm B}$ and ${\rm D}{\rm B}^{\rm f}$ models (Sections \ref{sec:Dfbmodel} and \ref{sec:gencasegraphicalsubsec}).
First, define
\begin{align} \label{eq:linearsystemsolutiondfbf}
\widehat{\nu}_j := \frac{j! \beta_2^{j-1}}{\prod_{k=1}^j (k+(2d-k)\beta_2)}
\end{align}
following \eqref{eq:linearsystemsolution}.
Then, for each $x \in \mathbb{Z}^d$ and each $S \subseteq {\cal N}(x)$ with $|S|=j$, draw $\delta$-arrows from all sites in $S$ to $x$ simultaneously at rate $\frac1{1+\beta_1} \widehat{\nu}_j$, and draw regular arrows at rate $\frac{\beta_1}{1+\beta_1} \widehat{\nu}_j$.
In the dual process, for a given particle at $x \in \mathbb{Z}^d$ and each $S \subseteq {\cal N}(x)$ with $|S|=j$, at rate $\frac1{1+\beta_1} \widehat{\nu}_j$, the particle at $x$ splits into $j$ particles and places one offspring at each site $y \in S$, and at rate $\frac{\beta_1}{1+\beta_1} \widehat{\nu}_j$, the particle remains at $x$ and places $j$ copies of itself at each $y \in S$.

In one dimension, the survival probability for the ${\rm D}^{\rm f}{\rm B}^{\rm f}$ model, starting with the origin occupied, is given by
\begin{align*}
    \frac{2(\lambda_1\lambda_2-1)}{2\lambda_1\lambda_2+\lambda_2-1},
\end{align*}
which reduces to \eqref{eq:survprobsol} for $\lambda_1=1$.
In higher dimensions, every 0--1 edge has a positive bias toward the type-1 particle,
in contrast to the ${\rm DB}^{\rm f}$ model.
For the minimal-bias configuration in Figure \ref{fig:contribution}a,
the type-0 particle at $w$ dies and is replaced by an offspring of $v$ at rate $1/(2d)$, while the type-1 particle at $v$ dies and is replaced by an offspring of $w$ at rate $(1/\lambda_1)(1/(2d))$.
The bias toward the type-1 particle is therefore
\begin{align*}
    \frac{1/(2d)-(1/\lambda_1)(1/(2d))}{1/(2d) + (1/\lambda_1)(1/(2d))} = \frac{\lambda_1-1}{1+\lambda_1} > 0.
\end{align*}
This is the bias of 0--1 edges for $\beta_2=0$, in which case the ${\rm D}^{\rm f}{\rm B}^{\rm f}$ model reduces to the biased voter model.
It follows that the survival probability for the ${\rm D}^{\rm f}{\rm B}^{\rm f}$ model is lower bounded by $\beta_1/(1+\beta_1)$, see \eqref{eq:survprobbvm}.
In particular, it is positive for any $\beta_1 >0$.
We can otherwise obtain lower and upper bounds analogous to Propositions \ref{prop:upperbound} and \ref{prop:lowerbound} by taking 
\begin{align*}
    & p_1 = \frac{2d\lambda_1\lambda_2}{(2d-1)+\lambda_2+2d\lambda_1\lambda_2}, \\
    & \overline{p}_k = \frac{\lambda_1\lambda_2\big(1+(2d-1)\lambda_2\big)}{(2d-1)+\lambda_2+\lambda_1\lambda_2+(2d-1)\lambda_1\lambda_2^2}, \quad k \geq 2, \\
    & C_1 = \frac{\lambda_2}{1+(2d-1)\lambda_2} - \frac1{2d\lambda_1}.
\end{align*}
For $\lambda_1=1$, these quantities reduce to the corresponding quantities in Section \ref{sec:DBsurvprob}.

Denote the ${\rm D}^{\rm f}{\rm B}^{\rm f}$ process started from $A \in \overline{\cal S}$ by $(\varphi_t^A)_{t \geq 0}$.
Note that by eliminating the regular arrows from the graphical construction and running time at rate $1+\beta_1$,
we can couple $(\varphi_t^A)_{t \geq 0}$ with a ${\rm D}{\rm B}^{\rm f}$ model $(\xi_t^A)_{t \geq 0}$ so that $\xi_t^A \subseteq \varphi_{(1+\beta_1)t}^A$ for all $t \geq 0$.
Under this coupling, Proposition \ref{prop:linexpball} and Lemma \ref{lemma:extinctionproblargesetdb}  extend easily to the ${\rm D}^{\rm f}{\rm B}^{\rm f}$ model.
For Lemma \ref{lemma:cannothangaroundatsmallsize}, it suffices to note that the discrete-time jump process embedded in $(|\varphi_t^A|)_{t \geq 0}$ is stochastically lower bounded by the embedded random walk $(R_n^A)_{n \geq 0}$ for the biased voter model with $\beta = \beta_1$ (Section \ref{sec:bvmsurvprob}).
In conclusion, our analysis of the survival probability and the asymptotic shape for the ${\rm D}{\rm B}^{\rm f}$ model extends relatively easily to the more general ${\rm D}^{\rm f}{\rm B}^{\rm f}$ model.

\subsection{${\bf B}{\bf D}^{\bf f}$ model} \label{sec:BDfmodel}

We now consider a birth-death model where fitness affects the death event (${\rm B}{\rm D}^{\rm f}$ model).
In this model, type-0 particles have fitness 1 and type-1 particles have fitness $\lambda=1+\beta$ with $\beta>0$.
A particle at $x \in \mathbb{Z}^d$ gives birth to a new particle at rate 1, and a neighbor is selected to die with probability  inversely proportional to its fitness.
In other words, if $x$ gives birth when it has $i$ type-0 neighbors and $j$ type-1 neighbors, a type-0 neighbor is selected to die with probability $i/(i+j(1/\lambda)) = i\lambda/(i\lambda+j)$, and a type-1 neighbor is selected to die with probability $j(1/\lambda)/(i+j(1/\lambda)) = j/(i\lambda+j)$.

\begin{figure}[t]
    \centering
    \includegraphics[scale=0.45]{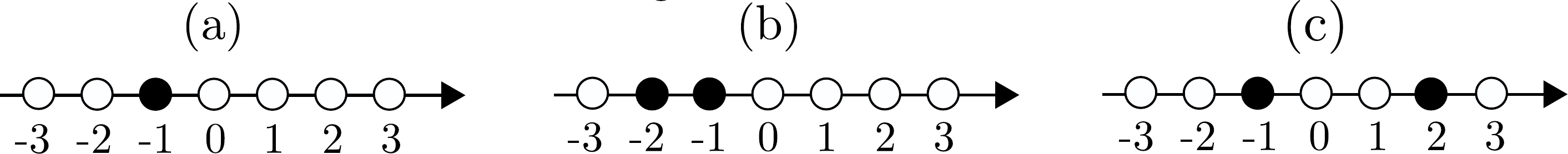}
    \caption{
    In the ${\rm BD}^{\rm f}$ model on $\mathbb{Z}$, the rate at which a type-0 particle (white) at the origin switches to type-1 (black) depends both on its neighbors and the neighbors of its neighbors.
    {\bf (a)} If the type-1 particle at $-1$ has a type-0 neighbor to its left, it replaces the particle at $0$ with rate $1/2$.
    {\bf (b)} If the type-1 particle at $-1$ has a type-1 neighbor to its left, the rate becomes $1/(1+\lambda)>1/2$.
    {\bf (c)} To determine the rate at which the type-0 particle at $0$ switches to type-1, it is not sufficient to know the number of neighbors and neighbors of neighbors of each type.
    For example, if there is a type-0 particle at $-2$ and type-1 particle at $2$, the type-1 particle at $-1$ replaces the particle at $0$ with rate $1/2$, which is different from the rate in ${\bf (b)}$.    }
    \label{fig:nographicalrepr}
\end{figure}

In the models we have investigated so far, the rate at which a site $x \in \mathbb{Z}^d$ switches from type-0 to type-1, or vice versa, has been determined by the number of neighbors of each type, see Sections \ref{sec:bvmdescr} and \ref{sec:dbmodeldef}.
The ${\rm BD}^{\rm f}$ model is fundamentally different in that the switching rate depends on the neighbors of neighbors of $x$.
For example, consider the configurations in Figure \ref{fig:nographicalrepr} for the ${\rm BD}^{\rm f}$ model on $\Z$, where white particles are type-0 and black particles are type-1.
In Figure \ref{fig:nographicalrepr}a, the type-1 particle at $-1$ gives birth and replaces the type-0 particle at the origin at rate $1/2$, but in Figure \ref{fig:nographicalrepr}b, this occurs at rate $1/(1+1/\lambda)>1/2$.
Moreover, to determine the switching rate for a site $x$, it is not sufficient to know the number of neighbors and neighbors of neighbors of $x$ of each type.
For example, in Figures \ref{fig:nographicalrepr}b and \ref{fig:nographicalrepr}c, the type-0 particle at $0$ has one type-1 neighbor and one type-1 neighbor of neighbor, but the rate at which the type-0 particle switches to type-1 is different for each case.

By a similar reasoning, we see that the ${\rm BD}^{\rm f}$ model is not an additive process, in contrast to the processes considered above.
Denote the ${\rm B}{\rm D}^{\rm f}$ model on $\mathbb{Z}$ by $(\psi_t^A)_{t \geq 0}$, starting from the set $A \in \overline{\cal S}$.
At time 0 in $(\psi_t^{\{-1\}})_{t \geq 0}$, the origin becomes occupied at rate $1/2$ (Fig \ref{fig:nographicalrepr}a), and the same is true of the process $(\psi_t^{\{-2\}} \cup \psi_t^{\{-1\}})_{t \geq 0}$.
In $(\psi_t^{\{-2,-1\}})_{t \geq 0}$, however, the origin becomes occupied at rate $1/(1+1/\lambda)>1/2$ (Fig \ref{fig:nographicalrepr}b).
This simple example shows that $(\psi_t^{\{-2\}})_{t \geq 0}$, $(\psi_t^{\{-1\}})_{t \geq 0}$ and $(\psi_t^{\{-2,-1\}})_{t \geq 0}$ cannot be simultaneously coupled so that
\[
\psi_t^{\{-2,-1\}} = \psi_t^{\{-2\}} \cup \psi_t^{\{-1\}}, \quad t \geq 0,
\]
meaning that $(\psi_t^A)_{t \geq 0}$ is not additive.
It follows that the ${\rm B}{\rm D}^{\rm f}$ model cannot be represented graphically using the basic tools introduced above, that is, by drawing arrows and $\delta$-arrows \cite{durrett1988lecture,durrett95,gray1986duality}.

In one dimension, the survival probability for the ${\rm B}{\rm D}^{\rm f}$ model, starting with the origin occupied, can be easily computed as
\begin{align} \label{eq:BDfmodelsurvproboned}
    \frac{(\lambda+1)(\lambda-1)}{\lambda^2+2\lambda-1}.
\end{align}
In Figure \ref{fig:ratios_comparison}, we show \eqref{eq:BDfmodelsurvproboned} as a proportion of the survival probability \eqref{eq:survprobbvm} for the biased voter model, and we compare it with the proportion between the survival probability \eqref{eq:survprobsol} for the ${\rm DB}^{\rm f}$ model and for the biased voter model.
Note first
that the survival probability for the biased voter model dominates that of the ${\rm BD}^{\rm f}$ and ${\rm DB}^{\rm f}$ models.
Then note that while the proportion for the ${\rm DB}^{\rm f}$ model decreases from 1 to 2/3 as $\beta \to \infty$, it is nonmonotonic for the ${\rm BD}^{\rm f}$ model, and it converges to 1 both as $\beta \to 0$ and $\beta \to \infty$.
Thus, the survival probability for the ${\rm BD}^{\rm f}$ model is the same as for the biased voter model in the limits of weak and strong selection.
The proportion for the ${\rm BD}^{\rm f}$ model is smallest $(1/4)(2+\sqrt{2}) \approx 0.85$ at $\beta=\sqrt{2}$.

For the survival probability in higher dimensions, we note that our analysis in Section \ref{sec:DBsurvprob} does not rely on the model having a graphical representation.
The configurations of type-0 and type-1 particles giving rise to the minimal and maximal biases of 0--1 edges are the same (Fig \ref{fig:contribution}), and we can obtain the bounds in Propositions \ref{prop:upperbound} and \ref{prop:lowerbound} with the substitutions
\begin{align*}
    & p_1 = \frac{1+(2d-1)\lambda}{(2d+1)+(2d-1)\lambda}, \\
    & \overline{p}_k = \frac{\lambda\big(1+(2d-1)\lambda\big)}{(2d-1)+2\lambda+(2d-1)\lambda^2}, \quad k \geq 2, \\
    & C_1 = \frac1{2d}-\frac1{(2d-1)+\lambda}.
\end{align*}

In one dimension, we can easily obtain the shape theorem \eqref{eq:shapethemoned}.
However, we are not able to obtain a shape theorem in higher dimensions using the same argument as in Section \ref{sec:shapethm}, since the argument relies on the additivity of the process and the associated dual process.

\begin{figure}[t]
    \centering
    \includegraphics[scale=1]{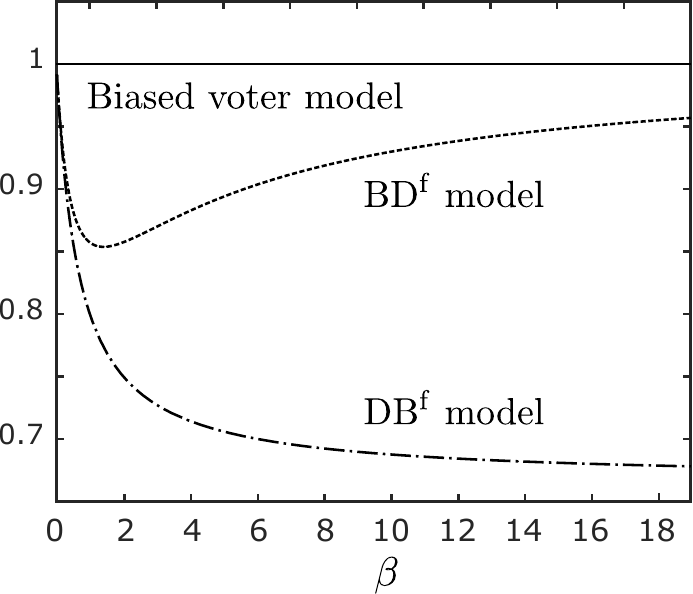}
    \caption{
    Comparison of the survival probabilities for the biased voter model (${\rm B}^{\rm f}{\rm D}$ model), death-birth model (${\rm DB}^{\rm f}$ model) and ${\rm BD}^{\rm f}$ model in one dimension (on $\mathbb{Z}$), starting with the origin occupied.
    The probabilities are shown as a proportion of the survival probability for the biased voter model.
    }
    \label{fig:ratios_comparison}
\end{figure}

Finally, we can define a ${\rm B}^{\rm f}{\rm D}^{\rm f}$ model, where type-0 particles give birth at rate 1 and type-1 particles give birth at rate $\lambda_1=1+\beta_1$ with $\beta_1>0$.
Then, on the subsequent death event, neighboring type-0 particles have fitness 1 and neighboring type-1 particles have fitness $\lambda_2=1+\beta_2$ with $\beta_2>0$.
To avoid repetition, we will not discuss this model further, other than to remark that it can be analyzed in a similar way to the ${\rm BD}^{\rm f}$ model.

\subsection{Conclusions}

The preceding discussion reveals several interesting differences between the models considered, depending on the order of birth and death events and how fitness affects the dynamics:
\begin{itemize}
    \item When fitness affects the former updating event, the death-birth (${\rm D}^{\rm f}{\rm B}$) and birth-death (${\rm B}^{\rm f}{\rm D}$) models are equivalent.
    \item When fitness affects the latter updating event, the death-birth (${\rm DB}^{\rm f}$) and birth-death $({\rm BD}^{\rm f})$ models are not only not equivalent, the birth-death model is not additive. In this respect, the ${\rm BD}^{\rm f}$ model is fundamentally different from the ${\rm B}^{\rm f}{\rm D}$, ${\rm D}^{\rm f}{\rm B}$ and ${\rm DB}^{\rm f}$ models.
    \item Since the ${\rm D}^{\rm f}{\rm B}$ model is equivalent to the biased voter model, our analysis for the ${\rm DB}^{\rm f}$ model extends relatively easily to the more general ${\rm D}^{\rm f}{\rm B}^{\rm f}$ model.
    \item 0--1 edges with no bias toward the type-1 particle arise only in the ${\rm DB}^{\rm f}$ and ${\rm BD}^{\rm f}$ models, where fitness affects the latter event only.
\end{itemize}

\section{Open problems: Death-birth model in weak-selection limit} \label{sec:openproblems}

In one dimension, the death--birth model $({\rm DB}^{\rm f}$ model) has survival probability $(2\beta)/(2+3\beta)$ when started from a single type-1 particle (Section \ref{sec:onedsurvprob}).
In the limit of weak selection ($\beta \to 0$), the probability is of order $\beta$.
In higher dimensions, the survival probability is bounded by $c \beta^{d/(d-1)}$ and $C\beta$ as $\beta \to 0$ for some $c,C>0$, as we showed in Sections \ref{sec:survprobupperbound} and \ref{sec:survproblowerbound}.
For the biased voter model, the survival probability is 
$\beta/(1+\beta)$, which is 
of order $\beta$ as $\beta \to 0$ (Section \ref{sec:bvmsurvprob}).
While the survival probabilities for the death-birth and biased voter model are in general distinct, an interesting open problem is to determine whether they are the same in the weak--selection limit.
We know that this is the case in one dimension, and in higher dimensions, the lower bound $c \beta^{d/(d-1)}$ for the death-birth model approaches the order of the upper bound $C\beta$ as $d \to \infty$.

Recall that in one dimension, the embedded jump process $(S_n^0)_{n \geq 0}$ has a positive drift of $(\lambda-1)/(1+3\lambda)$ when $S_n^0=1$ and $\Delta = (\lambda-1)/(1+\lambda)$ when $S_n^0 \geq 2$, where $\Delta$ is the uniform drift of the analogous jump process for the biased voter model (Section \ref{sec:onedsurvprob}).
The difference between the survival probabilities for the two processes on $\mathbb{Z}$ can therefore be traced to different behaviors when there is a single type-1 particle in the system.
Motivated by this fact, Kaveh et al.~\cite{kaveh2015duality} have suggested an approximation of the survival probability in two dimensions ($d=2$) by taking $p_1$ as in \eqref{eq:p1} and setting $p_k = \lambda/(1+\lambda)$ for $k \geq 2$. This leads to the approximation
\begin{align*}
    \frac{2d(\lambda-1)}{(2d+1)\lambda-1} = \frac{2d\beta}{2d+(2d+1)\beta}.
\end{align*}
In their Figure 7, Kaveh et al.~present simulation results for $d=2$ and several values of $\beta \in [0,3]$, which indicate that this approximation is reasonable.
We cannot conclude that the approximation has the correct asymptotics in the $\beta \to 0$ limit, but it is consistent with the hypothesis that the survival probability is of order $\beta$ as $\beta \to 0$.

While the dual processes for the death-birth model and the biased voter model are seemingly very different, they turn out to behave similarly in the weak-selection limit.
In the dual process $(\zeta_t^B)_{t \geq 0}$ for the death-birth model, each particle splits into $j$ particles at rate $\alpha_j := \binom{2d}j \nu_j$, $1 \leq j \leq 2d$, and the $j$ particles are placed at a randomly chosen subset of neighbors of size $j$.
It is straightforward to verify that for small $\beta$,
\begin{align} \label{eq:ratesdualweakselection}
\begin{split}
    & \textstyle \alpha_1 = 1 - \frac{2d-1}{2d} \beta + O(\beta^2), \\
    & \textstyle \alpha_2 = \frac{2d-1}{2d} \beta + O(\beta^2), \\
    & \textstyle \alpha_j = O(\beta^2), \quad 3 \leq  j \leq 2d.    
\end{split}
\end{align}
Thus, as $\beta \to 0$, particles in the dual process jump at rate $\sim 1$, they split into two particles at rate $\sim \frac{2d-1}{2d} \beta$, and they split into three or more particles much less frequently.
When a particle splits into two particles, the two new particles are placed at neighboring sites of the parent particle, meaning that they start two lattice locations apart.

In the dual $(\eta_t^B)_{t \geq 0}$ for the biased voter model, particles jump at rate 1, and they give birth to a new particle at rate $\beta$, which is placed at a neighboring site. 
When a particle gives birth, the parent-daughter pair coalesces on the first jump of parent or daughter with probability $\frac1{2d}$.
Conversely, the pair avoids coalescence on the first jump with probability $\frac{2d-1}{2d}$, in which case the parent and daughter become two lattice locations apart.
Thus, if we ignore any new particle which coalesces with its parent on the first jump of either particle, the dual process $(\eta_t^B)_{t \geq 0}$  produces new particles at rate $\frac{2d-1}{2d} \beta$, and a parent-daughter pair will be two lattice locations apart after the first transition of either particle.
Thus, the particle production dynamics of the dual processes $(\zeta_t^B)_{t \geq 0}$ and $(\eta_t^B)_{t \geq 0}$ become similar as $\beta \to 0$.
This in turn indicates that the survival probabilities for the death-birth and biased voter model are similar, since using the duality relation \eqref{eq:dualitydeathbirth}, the survival probability of the death-birth model can be related to an occupation probability for the dual process via
\begin{align*}
    \textstyle \P(\tau_\varnothing^0 = \infty) = \lim_{t \to \infty} \P(\xi_t^0 \neq \varnothing) = \lim_{t \to \infty} \P(0 \in \zeta_t^{\mathbb{Z}}). 
\end{align*}

In \cite{durrett2016spatial}, the propagation speed of the biased voter model on $\mathbb{Z}^d$, that is, the radius of the asymptotic shape $\widetilde D$ in \eqref{eq:shapethmbvm}, is calculated in the weak-selection limit using the duality relation \eqref{eq:dualityrelationbvm}.
Another interesting open problem is to determine whether the similarity between the dual processes $(\zeta_t^B)_{t \geq 0}$ and $(\eta_t^B)_{t \geq 0}$ as $\beta \to 0$ induces the same propagation speed for the death-birth model as for the biased voter model.

\section{Proofs} \label{sec:proofs}

\subsection{Solution to the linear system (\ref{eq:linearsystem})} \label{app:linearsystem}

Here, we verify by induction that \eqref{eq:linearsystemsolution} solves \eqref{eq:linearsystem}. 
Note first that $\nu_1$ as given by \eqref{eq:linearsystemsolution} satisfies the first equation in \eqref{eq:linearsystem}.
Assume that~\eqref{eq:linearsystemsolution} holds for $\nu_1, \ldots, \nu_j$ for some $j=1,\ldots,2d-1$.  Then,
\begin{align*}
\nu_{j+1} &= \textstyle \frac{j+1}{j+1+(2d-j-1)\lambda} - \sum_{\ell=1}^j {j+1 \choose \ell} \nu_\ell \\
&= \textstyle \frac{j+1}{j+1+(2d-j-1)\lambda} - \sum_{\ell=1}^j {j+1 \choose \ell} \cdot \frac{\ell!(\lambda-1)^{\ell-1}}{\prod_{k=1}^\ell (k+(2d-k)\lambda)} \\
&= \textstyle \frac{1}{D_1^{j+1}} \underbrace{\textstyle \Big((j+1) D_1^j - \sum_{\ell=1}^j (j+1)\cdots(j+2-\ell) \cdot D_{\ell+1}^{j+1} \cdot (\lambda-1)^{\ell-1} \Big)}_{(\ast)},
\end{align*}
where $D_i^j = \prod_{k=i}^j (k + (2d-k)\lambda)$.
We are finished once we verify that $(\ast)$ is equal to $(j+1)! (\lambda-1)^j$.
This follows by inductively combining the positive term inside the parenthesis with the first term in the sum on the right, and observing that for each $i = 1, \ldots, j$,
$$
D_i^j - D_{i+1}^{j+1} = (j+1-i) (\lambda-1) D_{i+1}^j,
$$
which implies
$$
(j+1)\cdots(j+2-i) \cdot (\lambda-1)^{i-1} \cdot \big(D_i^j - D_{i+1}^{j+1}\big) = (j+1)\cdots(j+1-i)\cdot (\lambda-1)^i \cdot D_{i+1}^j.
$$

\subsection{Proof of Lemma \ref{isoperimetric}} \label{app:isoperimetric}

\begin{proof}
For a finite set $A\subseteq \bZ^d$, let $A_i$ denote the projection of $A$ along the $i$-th coordinate axis, that is, $A_i = \{(x_1, \ldots, x_{i-1}, x_{i+1}, \ldots, x_{d}) : (x_1,\ldots, x_{d}) \in A\}$. Then, by the Loomis-Whitney inequality (Theorem 2 of \cite{loomis1949inequality}),
\[
\textstyle \prod_{i=1}^d \abs{A_i} \geq \abs{A}^{d-1},
\]
which implies that there exists $j \in \{1, \ldots, d\}$ such that $\abs{A_j} \geq \abs{A}^{(d-1)/d}$.
Now, each element $(x_1,\ldots,x_{j-1},x_{j+1},x_d)$ of $A_j$ corresponds to at least two elements of $\bar\partial^\infty A$.  Namely, if 
\begin{align*}
\alpha =& \min \{y : (x_1, \ldots, x_{j-1}, y, x_{j+1}, \ldots, x_{d}) \in A\}, \\
\beta =& \max \{y : (x_1, \ldots, x_{j-1}, y, x_{j+1}, \ldots, x_{d}) \in A\},
\end{align*}
then $(x_1, \ldots, x_{j-1}, \alpha-1, x_{j+1}, \ldots, x_{d}) \in \bar\partial^\infty A$ and $(x_1, \ldots, x_{j-1}, \beta+1, x_{j+1}, \ldots, x_{d}) \in \bar\partial^\infty A$.  This proves the lemma.
\end{proof}

\subsection{Proof of Proposition \ref{prop:lowerbound}} \label{app:lowerbound}

\begin{proof}
Let $(X_n)_{n \geq 0}$ denote the simple random walk on the integers with $X_0=1$ and absorption at zero, with jump probabilities
$p_1$ according to \eqref{eq:p1}, and 
\begin{equation*}
\textstyle \underline{p}_k = \frac{1}{2} (1 + C_2k^{-1/d}), \quad k \geq 2,
\end{equation*}
where $C_2 := 2C_1/(2d+1)>0$ with $C_1$ defined as in \eqref{eq:C1}.
Set ${q}_1 := 1-{p}_1$ and $\underline{q}_k := 1-\underline{p}_k$ for $k \geq 2$.
By a generalized version of the gambler's ruin formula (Theorem 5.3.11 of \cite{MR3930614}), the survival probability for $(X_n)_{n \geq 0}$ is
\begin{align}  \label{eq:survproblb}
    \textstyle \big(1+\sum_{\ell=1}^\infty \frac{q_1 \underline{q}_2 \cdots \underline{q}_\ell}{p_1 \underline{p}_2 \cdots \underline{p}_\ell}\big)^{-1} = \textstyle \big(1+\frac{q_1}{p_1} \cdot \big( 1+ \sum_{\ell=2}^\infty \frac{\underline{q}_2 \cdots \underline{q}_\ell}{\underline{p}_2 \cdots \underline{p}_\ell}\big) \big)^{-1}.
\end{align}
Note that for $\ell \geq 2$,
\begin{align} \label{eq:lowerboundcalc}
    a_\ell &:=  \textstyle \frac{\underline{q}_2 \cdots \underline{q}_\ell}{\underline{p}_2 \cdots \underline{p}_\ell} = \prod_{k=2}^\ell \frac{1 - C_2 k^{-1/d}}{1+ C_2 k^{-1/d}} 
    \leq \prod_{k=2}^\ell \big(1 - C_2 k^{-1/d}\big) \nonumber \\
    &\leq \textstyle  \exp\big(\!-\!\sum_{k=2}^\ell C_2 k^{-1/d} \big) \leq \exp\big( \!-\! C_2 \ell^{(d-1)/d}\big),
\end{align}
where we use that $C_2>0$ and $1-x \leq e^{-x}$ for all $x$.
Now, bounding the sum $\sum_{\ell=2}^\infty a_\ell$ by an integral and changing variables, we obtain
\begin{equation*}
    \begin{aligned}
        \textstyle \sum_{\ell=2}^\infty a_\ell &\le \textstyle C_2^{-d / (d-1)} \int_0^\infty \exp\big({-x^{(d-1)/d}}\big)dx.
    \end{aligned}
\end{equation*}
Since $(X_n)_{n \geq 0}$ stochastically lower bounds $(S_n^0)_{n \geq 0}$, the result follows.
\end{proof}

\subsection{Proof of Lemma \ref{lemma:extinctionproblargesetdb}} \label{app:shapethm1}

\begin{proof}
For $k \geq 1$, let $(X_n^k)_{n \geq 0}$ denote the simple random walk defined in the proof of Proposition \ref{prop:lowerbound} with $X_0^k = k$.
Since $(X_n^k)_{n \geq 0}$ stochastically lower bounds the embedded jump process $(S_n^{A})_{n \geq 0}$ for any $A \subseteq \mathbb{Z}^d$ with $|A|=k$, we can write
\begin{align*}
    \textstyle \sup_{A \in \overline{\cal S}, \abs{A} = k} \P(\tau_\varnothing^A < \infty) \le \P(X_m^k = 0 \text{ for some } m\ge 0).
\end{align*}
To analyze the latter probability, we first bound from below the probability that $(X_n^k)_{n \geq 0}$ ever goes below $k$.
By the generalized version of the gambler's ruin formula, we can write for $k \geq 2$,
\begin{align*}
\P\big(X_m^k \ge k \text{ for all $m \geq 0$}\big) &= \textstyle \big(\sum_{\ell=0}^\infty \prod_{n=k}^{k+\ell - 1} \frac{1-C_2 n^{-1/d}}{1 + C_2 n^{-1/d}}\big)^{-1},
\end{align*}
where $C_2 \in (0,1)$. Following a calculation analogous to~\eqref{eq:lowerboundcalc}, there is a constant $C_3 := C_2 / 2 \in (0,1/2)$ such that
$$
\textstyle a_\ell^k := \prod_{n=k}^{k+\ell - 1} \frac{1-C_2 n^{-1/d}}{1 + C_2 n^{-1/d}} \leq \exp\big(- C_2 \ell(k+\ell)^{-1/d}\big)\leq \begin{cases}
\exp\big(-C_3 \ell k^{-1/d}\big), & 0\le\ell \le k, \\
\exp\big(-C_3 \ell^{(d-1)/d}\big), & \ell > k.
\end{cases}
$$
Therefore, using $1- e^{-z} \geq z/2$ for $z\in(0,1/2)$, we obtain
$$
\textstyle\sum_{\ell=0}^k a_\ell^k \le \sum_{\ell=0}^\infty \big(e^{-C_3 k^{-1/d}}\big)^\ell = \big(1-e^{-C_3 k^{-1/d}}\big)^{-1} \leq (2/C_3) k^{1/d},$$
and
$$
\textstyle \sum_{\ell=k+1}^{\infty} a_\ell^k \leq \sum_{\ell = 0}^\infty \exp\big(-C_3 \ell^{(d-1)/d}\big)  =: M < \infty.
$$
It follows that there exists $C_4 := ((2/C_3) + M)^{-1}>0$ such that
\begin{equation}\label{escape prob}
\P\big(X_m^k \ge k \text{ for all $m \geq 0$}\big) \geq C_4 k^{-1/d}, \quad k\ge 2.
\end{equation}
Set $T_n^k := \inf\{m\ge 0 : X_m^k = n\}$. The strong Markov property and~\eqref{escape prob} imply that
\begin{align*}
&\P\big(X_m^k = 0 \text{ for some } m\ge 0\big) \\
&= \P(T_{k-1}^k < \infty, T_{k-2}^k<\infty, \ldots, T_0^k<\infty) \\
&= \textstyle \prod_{\ell=1}^k \P(T_{\ell-1}^\ell<\infty) \\
&\leq \textstyle  \P(T_0^1<\infty) \cdot  \prod_{\ell=2}^k (1 - C_4 \ell^{-1/d}) \\
&\leq \textstyle \P(T_0^1<\infty) \cdot \exp\big(-C_4 \sum_{\ell=2}^k \ell^{-1/d}\big) \\ &\leq \P(T_0^1<\infty) \cdot \exp\big(-C_4 k^{(d-1)/d}\big),
\end{align*}
which proves the lemma with $C := \P(T_0^1<\infty)>0$ and $\gamma := C_4>0$.
\end{proof}

\subsection{Proof of Lemma \ref{lemma:cannothangaroundatsmallsize}} \label{app:shapethm2}

\begin{proof}
Set $x := \varepsilon t^{d/(d+1)}$ and define $C_2 := 2C_1/(2d+1)>0$ with $C_1$ defined as in \eqref{eq:C1}.
Let $(Z_m)_{m \geq 0}$ denote the simple random walk on $\mathbb{Z}$ with $Z_0 = 1$ and transition probabilities
$$
\textstyle \P(Z_{m+1} = n+1 | Z_m = n) = 1- \P(Z_{m+1} = n-1 | Z_m=n) = \frac12(1 + C_2 x^{-1/d})
$$
for all $n\in \bZ$ and $m\ge 0$. 
If $T := \inf\{m\ge 0 : S_m^0 \notin (0,x)\}$, then $(S_{m\wedge T}^0)_{m \geq 0}$ stochastically dominates $(Z_{m\wedge T})_{m \geq 0}$.
Therefore, we can couple $(Z_m)_{m \geq 0}$ with $(S_m^0)_{m \geq 0}$ such that $Z_{m\wedge T} \le S_{m\wedge T}^0$ for all $m\ge 0$, and we can extend $Z_m$ beyond time $T$ independently.
Under this coupling, observe that $Z_m \geq x$ implies $T \leq m$. Therefore,
$$
\P(Z_m \ge x) \leq \P(S_j^0 \notin (0,x) \text{ for some $j\le m$}).
$$
Let $M_t$ be the number of jumps taken by $(|\xi_s^0|)_{s \geq 0}$ up until time $t$, and let $(N_t)_{t \geq 0}$ be a Poisson process with rate 1.  Then $(M_{t\wedge \tau_\varnothing^0})_{t \geq 0}$ stochastically dominates $(N_{t\wedge \tau_\varnothing^0})_{t \geq 0}$.  Also, observe that if $|\xi_s^0| \in (0,x)$ for all $s\le t$ and $M_t\ge t/2$, then $S_{t/2}^0 \in (0,x)$, which implies $Z_{t/2}< x$. Thus,
\begin{align} \label{jumps bound}
\P(|\xi_s^0| \in (0,x), \; s\le t)
&\le \P({N_t < t/2}) + \P({Z_{t/2} < x}) \nonumber \\
&\le e^{-t/6} + \P({Z_{t/2}<x}).
\end{align}
By Hoeffding's inequality, provided $\varepsilon>0$ is small enough so that $C_2 / (2\varepsilon^{1/d}) > 3\varepsilon/2$,
\begin{align*}
\P(Z_{t/2} < x) &\leq \P\big(|Z_{t/2} - \E Z_{t/2}| > C_2 x^{-1/d}t/2 - x\big) \\
&= \textstyle \P\big(|Z_{t/2} - \E Z_{t/2}| >(\frac{C_2}{2 \varepsilon^{1/d}} - \varepsilon) t^{d/(d+1)}\big) \\
&\leq \textstyle \P\big(|Z_{t/2} - \E Z_{t/2}| > (t/2) \cdot \varepsilon t^{-1/(d+1)}\big) \\
&\leq \textstyle 2 \exp\big(-\frac{t}{4}\cdot(\varepsilon t^{-1/(d+1)})^2 \big) \\
&= 2 \exp\big(-\gamma t^{(d-1)/(d+1)}\big),
\end{align*}
where $\gamma := \varepsilon^2/4$.
Combining with~\eqref{jumps bound}, this gives the result for large enough $t$, and $C$ can be chosen large enough so the result holds for all $t$.
\end{proof}

\subsection{Proof of Proposition \ref{prop:shapethmdb}} \label{app:shapethm3}

\begin{proof}
To prove a shape theorem for the death-birth model, we need to verify a slight weakening of the conditions given in Durrett and Griffeath \cite{DurrettGriffeath82}.
First, define
$$
K_t :=\{y\in\mathbb{R}^d: \exists \, x\in\xi_t^0\mbox{ s.t. } ||x-y||_{\infty}\leq 1/2\}, \quad t \geq 0.
$$
The conditions to be verified are that there exist constants $\gamma, c, C, p \in (0,\infty)$ such that
\begin{align} 
& \P(t < \tau_\varnothing^0 < \infty) \le C e^{-\gamma t^p}, \quad t\ge 0, \label{eq:extinctTimeCond} \\
& \P(x\notin K_t, \tau_\varnothing^0 = \infty) \le Ce^{-\gamma t^p}, \quad t\ge 0, \; ||x||_{\infty}< ct. \label{eq:growthRateCond}
\end{align}
This is a weakening of the conditions in \cite{DurrettGriffeath82} since our conditions allow for $p\neq 1$, in particular $p \in (0,1)$.
With this weakening, condition \eqref{eq:growthRateCond} follows immediately from Proposition \ref{prop:linexpball} with $p=1/2$.
Condition \eqref{eq:extinctTimeCond} follows from Lemmas \ref{lemma:extinctionproblargesetdb} and \ref{lemma:cannothangaroundatsmallsize}, as we now show.
Define $V_k := \inf\{t \geq 0: \abs{\xi_t^0} = k\}$ for $k \geq 1$ and note that by the strong Markov property,
\begin{align*}
\P(t<\tau_\varnothing^0<\infty, V_k < t) &\le \textstyle \sum_{A\subseteq \bZ^d, \abs{A} = k} \P(\tau_\varnothing^0 < \infty, V_k<t, \xi_{V_k}^0 = A) \\
&\le \textstyle \sum_{A\subseteq \bZ^d, \abs{A} = k} \P(\tau_\varnothing^0 < \infty, \xi_{V_k}^0 = A  |  V_k < \infty) \\
&= \textstyle \sum_{A\subseteq \bZ^d, \abs{A} = k} \P(\tau_\varnothing^A < \infty) \P(\xi_{V_k}^0 = A  |  V_k < \infty) \\
&\le \textstyle \sup_{A\subseteq \bZ^d, \abs{A} = k} \P(\tau_\varnothing^A < \infty).
\end{align*}
Then note that
\begin{align*}
    \P(t<\tau_\varnothing^0<\infty, V_k>t) \leq \P(|\xi_s^0| \in (0,k), \; s\le t).
\end{align*}
By selecting $\varepsilon>0$ sufficiently small and setting $k = \lfloor \varepsilon t^{d/(d+1)}\rfloor$, condition \eqref{eq:extinctTimeCond} now follows from Lemmas \ref{lemma:extinctionproblargesetdb} and \ref{lemma:cannothangaroundatsmallsize} with $p=(d-1)/(d+1)$.
\end{proof}

\noindent {\bf Acknowledgments.}
This manuscript is dedicated to the memory of Ching-Shan Chou, our friend and colleague.
We would like to acknowledge Rick Durrett for his insights on the dual process for the death-birth model during discussions at the Mathematical Biosciences Institute's thematic program on cancer evolution (2014). \\

\noindent {\bf Funding.}
JF was supported in part by NSF grants DMS-1349724 and DMS-2052465.
EBG and KL were supported in part by NSF grant CMMI-1552764.
DS was supported in part by NSF grant CCF-1740761.
JF and KL were supported in part by the U.S.-Norway Fulbright Foundation and the Research Council of Norway R\&D Grant 309273.
EBG was supported in part by the Norwegian Centennial Chair grant and the Doctoral Dissertation Fellowship from the University of Minnesota.
\\

\noindent {\bf Conflict of Interest Statement.}
On behalf of all authors, the corresponding author states that there is no conflict of interest. 
 \bibliographystyle{unsrt}
\bibliography{citations}

\end{document}